\documentclass[12pt]{interact}
\usepackage{setspace}
\usepackage{epstopdf}
\usepackage{subfigure}

\usepackage{natbib}
\bibpunct[, ]{(}{)}{;}{a}{}{,}
\renewcommand\bibfont{\fontsize{10}{12}\selectfont}

\usepackage{amsmath,amssymb,amsfonts}
\usepackage{algorithmic}
\usepackage{graphicx}
\usepackage{textcomp}
\usepackage{xcolor}
\usepackage{url}

\usepackage{amsthm}
\newtheorem{theorem}{Theorem}[section]
\newtheorem{definition}[theorem]{Definition}

\newtheorem{example}[theorem]{Example}
\newtheorem{remark}[theorem]{Remark}
\def\*#1{\mathbf{#1}}
\def\~#1{\boldsymbol{#1}}

\usepackage{booktabs}
\usepackage{subfiles}
\usepackage{colortbl}
\usepackage{subfigure}
\usepackage{graphicx}
\usepackage{caption}

\usepackage[most]{tcolorbox}

\tcbset{colback=gray!10, colframe=black, left=4mm, right=4mm, top=2mm, bottom=2mm}

\begin{document}


\title{Shapley in Context: Explaining Financial Language with Domain Expertise}

\author{
\name{Dangxing Chen \textsuperscript{a}\thanks{Corresponding author. Email: dangxing.chen@gmail.com} and Pengzhan Guo \textsuperscript{b}} 
\affil{\textsuperscript{a}Zu Chongzhi Center, Duke Kunshan University, Kunshan, Jiangsu, CN; \\ \textsuperscript{b}Digital Innovation Research Center, Duke Kunshan University, Kunshan, Jiangsu, CN}
}

\maketitle

\begin{abstract}
In recent years, large language models have achieved remarkable success and have seen growing adoption in financial applications. At the same time, explainability remains critical in finance, a domain characterized by high stakes and strict regulatory requirements. Although numerous methods have been proposed to explain black-box machine learning models, the majority of these approaches are designed for general-purpose tasks and do not incorporate domain-specific knowledge. In this work, we study the explainability of financial textual data modeled by large language models through the lens of the Shapley value. Specifically, we investigate whether Shapley-based attributions align with established financial domain knowledge. Through rigorous theoretical analysis and extensive empirical evaluations, we demonstrate that Shapley values can yield explanations that are consistent with financial reasoning and can offer meaningful insights into the model’s behavior in text-based financial applications.

\end{abstract}

\begin{keywords}
Attribution Methods; Risk Management; Explainability; Large Language Models
\end{keywords}

\section{Introduction}

In recent decades, artificial intelligence (AI) and machine learning (ML) have achieved remarkable success across a wide range of fields. Among these advances, \emph{large language models} (LLMs) have attracted increasing attention, particularly following the release of ChatGPT in 2022. In the financial domain, LLMs are beginning to play an increasingly significant role and hold the potential to transform various aspects of finance through natural language processing (NLP), facilitating improved data generation, analysis, decision-making, and risk management. For instance, \cite{huang2023finbert} demonstrates how FinBERT, a finance-oriented LLM, enhances several key tasks in financial information processing; \cite{wu2024exploring} explores the use of LLMs in understanding behavior graphs and improving recommendation systems in online recruitment, including support for out-of-distribution scenarios; \cite{wu2023bloomberggpt} introduces BloombergGPT, a domain-specific LLM for finance that surpasses general-purpose models on financial tasks by utilizing specialized training data, while remaining competitive on broader NLP benchmarks; \cite{zhang2023enhancing} presents FinGPT, an LLM developed by the AI4Finance Foundation for real-time financial analysis through continual learning from public sources such as news, tweets, and Securities and Exchange Commission filings. A comprehensive survey of this emerging area can be found in \cite{li2023large}.

While LLMs have demonstrated success across a wide range of tasks, this often comes at the cost of reduced transparency and explainability. In high-stakes domains such as the financial industry, explainability is not merely desirable but essential. The \emph{Model Risk Management Handbook}\footnote{\url{https://www.occ.treas.gov/publications-and-resources/publications/comptrollers-handbook/files/model-risk-management/index-model-risk-management.html}}, published by the Office of the Comptroller of the Currency, emphasizes that ML models must be interpretable and their predictions explained in a clear and concise manner when deployed in practice. This requirement has driven significant research interest in the field of explainable machine learning, as evidenced by works such as \cite{lundberg2017unified, ribeiro2016should, horel2020significance, sundararajan2017axiomatic}.

Our work centers on the attribution problem in explainable ML models, that is, how one can assign a value to each input feature to reflect its contribution to the model’s prediction. {We focus on post-hoc explanations, where the model is fixed and has already been trained.} An influential approach to attribution is the Shapley value framework, initially proposed by \cite{shapley1953value} in game theory, and subsequently extended to ML applications by \cite{vstrumbelj2014explaining,lundberg2017unified}.
 One of the appealing aspects of this approach is its axiomatic foundation: with a small set of intuitive axioms, a unique attribution method can be derived. The rigorous mathematical foundation offers a solid framework, enabling practitioners to interpret complex black-box models in a fair and trustworthy manner. Building on this foundation, a wide range of Shapley-based attribution variants have since been proposed, see for example a comparison of different methods in the Shapley value family by  \cite{sundararajan2020many}.

Despite the success of the axiomatic approach, prior work has largely focused on attribution methods for general-purpose models without incorporating domain-specific knowledge \citep{sundararajan2020many, mosca2022shap}. In contrast, the value of domain expertise has been well recognized throughout the history of scientific inquiry. Recent research suggests that integrating domain knowledge into the explanation of ML models can significantly enhance their interpretability and relevance \citep{chen2025explaining, chen2024attribution, chen2024groups,chen2023interpretable}. Motivated by this, we turn our attention to the explainability of models applied to financial textual data, a less traditional yet increasingly important problem in the financial domain.

In this work, we ask the question: \textbf{Do Shapley value-based attributions provide consistent explanations for LLMs applied to financial text data?} To address this question, we propose a set of axioms that capture our expectations when interpreting LLM behavior. 
For instance, in sentiment analysis, we expect that a positive word should receive a positive attribution. Moreover, a more positive word should yield a higher attribution value. In more complex cases, we may wish to compare the relative importance of different words within the same statement, or even across different statements.
Through rigorous analysis, we show that the baseline Shapley value satisfies a set of domain knowledge-inspired axioms, enabling it to faithfully capture the importance of textual features. As a result, we can confidently use these attributions to interpret LLM outputs on financial texts.
We validate our findings using several LLM tasks, illustrating how the proposed axioms are respected in practice and how baseline Shapley values provide meaningful insights into model behavior.

Our work builds upon the recent contribution by \cite{chen2024attribution}, who investigate attribution methods in the context of asset pricing based on analytic functions. While their results offer valuable insights, the analytic-function-based foundation is not directly applicable to LLMs, which operate on unstructured data where features are neither naturally ordered nor associated with differentiable functions.
In this work, we extend their framework by considering \textit{partially ordered functions}, which are more suitable for modeling LLM behavior. Furthermore, we analyze the \textit{relative importance} of different features, both within a single model and across different models.
Our main contributions are summarized as follows:
\begin{itemize}
    \item We propose a set of domain knowledge-inspired axioms tailored to the structure of LLMs applied to financial text data.
    \item We demonstrate, through rigorous theoretical analysis, that the baseline Shapley value satisfies these axioms and thus provides reliable feature attributions.
    \item We empirically validate our theoretical findings on multiple LLM examples and show how Shapley-based attributions yield interpretable insights into financial language modeling.
\end{itemize}


The paper is organized as follows. Section 2 introduces the preliminaries on attribution methods and the Shapley value. In Section 3, we analyze the Shapley value for financial textual data through the lens of domain knowledge-inspired axioms. Section 4 {presents a large-scale experiment analyzing the risk attributions of the risk factors section in Form~10-K filings
}. Finally, Section 5 concludes the paper.

\section{Preliminaries}
\label{prerequisites}
We are interested in the classification and regression when the inputs are textual data. As an example of sentiment analysis, we might be interested in analyzing whether the following statement is positive, neutral, or negative:
\begin{align} \label{eq:NLP_eg1}
    \text{``The company reported strong earnings."}
\end{align}
The LLM model can then be used to generate a sentiment score, which in this instance should be very positive. Here, features may consist of words or \( N \)-grams. For instance, we might be interested in the attributions of phrases like ``the company reported'', ``strong'', or ``earnings''.

For problem setup, assume we have $n$ features. Given a vocabulary $\mathbb{V}$, consider a sequence of words (or $N$-gram) $\*x = (x_1 \dots x_n) \in \mathbb{X}$, where $x_i \in \mathbb{X}_i \subseteq \mathbb{V}$, $\mathbb{X}_i$ contains all possible choices for $x_i$, and $\mathbb{V}$ contains all possible phrase. If the output is a numerical value, say the sentiment score, then we are interested in the function of the form
\begin{align*}
    f: \mathbb{X}_1 \times \dots \times \mathbb{X}_n \rightarrow \mathbb{R}.
\end{align*}
Here, we interpret $f$ as an LLM model. As an example, $f$ may represent an LLM model that computes the expected sentiment of a given sentence. {We emphasize that our focus is on post-hoc explanations. That is, the model $f$ is already trained and fixed, and we do not modify it in any way. Instead, we aim to attribute the output of $f$ to its input features or text components, analyzing how different parts contribute to the prediction. This approach allows us to interpret and understand the model’s behavior without interfering with its training or performance.} For multiclass classification problem, we have a vector function of the form
\begin{align*}
    \*f: \mathbb{X}_1 \times \dots \times \mathbb{X}_n \rightarrow \mathbb{R}^n.
\end{align*}
For example, in the sector classification task where the goal is to predict the appropriate sector (e.g., finance, technology, etc.) based on a given text, the function \( f_i \) represents the probability that the text belongs to the \( i \)th class. We denote the family of such functions as $\mathcal{F}$ and we require that $f \in \mathcal{F}$ and $\*f \in \mathcal{F}$. 
Note that in this case, the elements in $\mathbb{X}_i$ are not necessarily ordered. However, in certain scenarios, a partial ordering may be defined over them. As in Example~\eqref{eq:NLP_eg1}, different choices for \(x_2\) = ``strong" are possible. For instance, consider the set \(\mathbb{X}_2 = \{``\text{weak}", ``\text{expected}", ``\text{strong}",``\text{solid}"\}\). While it may not be possible to directly compare ``strong'' and ``solid'', ``strong'' is clearly a stronger tone than ``weak''. This partial ordering allows us to analyze the properties of the function.

\subsection{Attribution Methods}
Following \cite{lundstrom2022rigorous}, we call the point of interest $\overline{\*x}$ to explain as an explicand, $\*x'$ a baseline, and $\*x$ as a general function input. The Baseline Attribution Method that interprets features' importance is defined below.

\begin{definition}[Baseline Attribution Method (BAM)] \label{def:BAM}
    Given an explicand $\overline{\mathbf{x}} \in \mathbb{X}$, a baseline $\mathbf{x}' \in \mathbb{X}$, and a function $f \in \mathcal{F}$, a baseline attribution method (BAM) is defined as any function of the form 
\[
\mathbf{A}(\overline{\mathbf{x}}, \mathbf{x}', f): \mathbb{X} \times \mathbb{X} \times \mathcal{F} \rightarrow \mathbb{R}^n.
\]
For simplicity, we may also write $\mathbf{A}$, $\mathbf{A}(\overline{\mathbf{x}})$, or $\mathbf{A}(f)$ when the other components are understood from context. Similarly, we use $A_i$, $A_i(\overline{\mathbf{x}})$, or $A_i(f)$ to denote the $i$-th component of the attribution vector.
\end{definition}

The definition of BAMs is broad, as it encompasses a wide range of potential methods. 
\cite{shapley1951notes} proposed Shapley value, which takes as input a characteristic function $v:2^N \rightarrow \mathbb{R}$, which produces attributions $\text{SH}_i$ for each player $i \in N$ that add up to $v(N)$, where $N = \{1, \dots, n\}$ and $2^N$ denotes the power set of $N$.

\begin{definition}[Shapley value]
    The Shapley value of a player $i$ is given by:
\begin{align*}
    \text{SH}_i(v) := \sum_{S \subseteq N \backslash \{i\}} \frac{|S|! (|N|-|S|-1)!}{|N|!} (v(S \cup \{i\}) - v(S)).
\end{align*}
\end{definition}
By employing an axiomatic approach, the Shapley value offers a rigorous and fair allocation of attributions based on the given characteristic function, thereby enabling users to adopt it with greater confidence. The result is formally summarized in the following theorem by \cite{shapley1953value}.

\begin{theorem} \label{thm:SH_uniq}
    Given the set $N = \{1, \dots, n\}$ and the characteristic function $v:2^N \rightarrow \mathbb{R}$, Shapley value is the unique method that preserves the following axioms:
    \begin{itemize}
        \item Linearity: Given two characteristic functions $v$ and $w$, $\text{SH}_i(v+w) = \text{SH}_i(v) + \text{SH}_i(w)$ for all $i$. 
        \item Completeness: $\sum_{i \in N} \text{SH}_i = v(N) - v(\emptyset)$.
        \item Symmetry: If $v(S \cup \{i\}) = v(S \cup \{j\})$ for all $S \subset N$ that doesn't contain $i$ and $j$, then $\text{SH}_i = \text{SH}_j$. 
        \item Dummy: If $v(S \cup \{i\}) = v(S)$ for all $S \subset N$ that doesn't contain $i$, then $\text{SH}_i=0$. 
    \end{itemize}
\end{theorem}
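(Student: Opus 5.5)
We prove the classical statement in two parts: existence, meaning the Shapley formula satisfies the four axioms, and uniqueness, meaning any method satisfying them coincides with the formula. Throughout we work in the vector space $\mathcal{G}$ of all characteristic functions $v:2^N\to\mathbb{R}$ with $v(\emptyset)$ arbitrary. Completeness is read as $\sum_i \text{SH}_i = v(N)-v(\emptyset)$, so constant games receive zero attribution.

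\textbf{Existence.} Linearity is immediate, because each $\text{SH}_i(v)$ is a fixed linear combination of the values $v(T)$.

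Dummy is also immediate: every marginal term $v(S\cup\{i\})-v(S)$ vanishes.

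For symmetry, let $\pi$ be the transposition of $i$ and $j$. We pair the term for $S\subseteq N\setminus\{i\}$ in $\text{SH}_i$ with the term for $\pi(S)$ in $\text{SH}_j$.
\begin{itemize}
\item If $j\notin S$, then $\pi(S)=S$, and the hypothesis gives $v(S\cup\{i\})=v(S\cup\{j\})$ directly.
\item If $j\in S$, write $S=R\cup\{j\}$. Then the two marginals are $v(R\cup\{i,j\})-v(R\cup\{j\})$ and $v(R\cup\{i,j\})-v(R\cup\{i\})$. These are equal because $v(R\cup\{j\})=v(R\cup\{i\})$ by hypothesis.
\end{itemize}
The weights of paired terms agree since $|\pi(S)|=|S|$.

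For completeness, we use the standard permutation representation:
\[
\text{SH}_i(v)=\frac{1}{n!}\sum_{\sigma}\bigl(v(P_i^\sigma\cup\{i\})-v(P_i^\sigma)\bigr),
\]
where $\sigma$ ranges over orderings of $N$ and $P_i^\sigma$ is the set of players preceding $i$ in $\sigma$. This holds because exactly $|S|!(n-|S|-1)!$ orderings have $P_i^\sigma=S$. For a fixed $\sigma$, summing over $i$ telescopes to $v(N)-v(\emptyset)$. Averaging over $\sigma$ then gives completeness.

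\textbf{Uniqueness.} Let $\phi$ be any method satisfying the four axioms.

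First, $\phi$ is additive by linearity. Completeness and dummy give $\phi(0)=0$. Additivity alone only yields $\mathbb{Q}$-homogeneity. We therefore read linearity as full $\mathbb{R}$-linearity, which is the usual convention.

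Next we use a basis adapted to the axioms. For nonempty $T\subseteq N$, define the unanimity game $u_T(S)=1$ if $T\subseteq S$ and $0$ otherwise. Adjoin the constant game $\mathbf{1}$. These $2^n$ games form a basis of $\mathcal{G}$. The proof is triangularity: ordering subsets by inclusion, any $v$ can be written as $v=v(\emptyset)\mathbf{1}+\sum_{T\neq\emptyset} c_T u_T$, with Möbius coefficients $c_T=\sum_{R\subseteq T}(-1)^{|T|-|R|}v(R)$ for nonempty $T$.

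We now compute $\phi$ on each basis element.
\begin{itemize}
\item For $u_T$, every $i\notin T$ is a dummy, so $\phi_i(u_T)=0$. Any $i,j\in T$ are symmetric, so they share a common value. Completeness gives $u_T(N)-u_T(\emptyset)=1$, hence $\phi_i(u_T)=1/|T|$ for $i\in T$.
\item For $\mathbf{1}$, all players are dummies, so $\phi(\mathbf{1})=0$.
\end{itemize}
Linearity then determines $\phi(v)$ completely. Since the Shapley formula also satisfies the axioms, it takes the same values on the basis, so $\phi=\text{SH}$.

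\textbf{Main obstacle.} The delicate step is establishing that the unanimity games together with the constant game form a basis, via the Möbius inversion computation. A secondary point is being explicit about the linearity-versus-additivity convention and the role of $v(\emptyset)$. Everything else is bookkeeping.
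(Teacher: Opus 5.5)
The paper does not prove this theorem; it only attributes it to Shapley (1953). Your argument is the classical proof from that reference: existence by direct verification together with the permutation representation, and uniqueness via the unanimity-game decomposition. It is correct.

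One point should be fixed rather than conceded. In the paper, the Linearity axiom is stated only as additivity, $\text{SH}_i(v+w)=\text{SH}_i(v)+\text{SH}_i(w)$. By strengthening it to full $\mathbb{R}$-linearity, you prove uniqueness only among a smaller class of methods than the statement covers.

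The strengthening is unnecessary. For any real $c$, the scaled game $c\,u_T$ is itself a characteristic function, so the axioms apply to it directly:
\begin{itemize}
\item players outside $T$ are dummies in $c\,u_T$;
\item any two players in $T$ are symmetric in $c\,u_T$, since both $c\,u_T(S\cup\{i\})$ and $c\,u_T(S\cup\{j\})$ vanish;
\item completeness gives a total of $c\,u_T(N)-c\,u_T(\emptyset)=c$.
\end{itemize}
Hence $\phi_i(c\,u_T)=c/|T|$ for $i\in T$ and $0$ otherwise, with no appeal to homogeneity. Likewise, every player is a dummy in $c\,\mathbf{1}$, so $\phi(c\,\mathbf{1})=0$. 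Finite additivity over $v=v(\emptyset)\mathbf{1}+\sum_{T\neq\emptyset}c_T u_T$ then determines $\phi(v)$, so the $\mathbb{Q}$-versus-$\mathbb{R}$ issue never arises.

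Shapley's original proof had to move negative coefficients to the other side of the equation, because his games were restricted to be superadditive. Here games are unrestricted, so that detour is not needed either. Finally, uniqueness uses only the fact that the unanimity games together with $\mathbf{1}$ span the space of games; linear independence, and hence the full basis claim you flag as the main obstacle, is not needed.
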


The axioms underlying the Shapley value ensure that the resulting attributions represent fair allocations that align with our requirements. The \textit{linearity} axiom guarantees that the Shapley value is computed in a transparent and interpretable manner. The \textit{completeness} axiom ensures that the sum of all attributions equals the function value, thereby making the attributions meaningful. The \textit{symmetry} axiom states that if two features contribute equally across all subsets, they must receive equal attributions, reinforcing fairness. Finally, the \textit{dummy} axiom ensures that any feature with no contribution to the prediction receives zero attribution.

While Shapley value is uniquely determined, there are different choices of characteristic functions. We focus on the \textit{Baseline Shapley value (BShap)} proposed by \cite{sundararajan2020many}, in which
\begin{align*}
    v(S) := f(\overline{\*x}_S; \*x'_{N \backslash S}).
\end{align*}
That is, baseline values replace the feature's absence. We denote BShap's attribution by $\textbf{BS}(\overline{\*x},\*x',f)$ and denote the $i$th attribution by $\text{BS}_i(\overline{\*x},\*x',f)$. For simplicity, we may also write $\textbf{BS}$, $\textbf{BS}(\overline{\mathbf{x}})$, or $\textbf{BS}(f)$ when the other components are understood from context. Similarly, we use $\text{BS}_i$, $\text{BS}_i(\overline{\mathbf{x}})$, or $\text{BS}_i(f)$ to denote the $i$-th component of the attribution vector. As in Example \eqref{eq:NLP_eg1}, we can consider $x_1' =$``the company reported", $x_2'=$`` ", and $x_3'=$``financial statement". Note that for this selection, we are primarily interested in the attributions for $x_2$ and $x_3$, as $x_1$ is believed to be less relevant for sentiment.
As a result, we consider a baseline text
\begin{align*}
    ``\text{The company reported financial statement.}"
\end{align*}
Now suppose $S = \{2\}$, then the characteristic function calculates 
\begin{align*}
    v(S) = f(``\text{The company reported strong financial statement.}")
\end{align*}
In this case, since $\overline{x}_1=x_1'$, it reduces to a two-dimensional attribution problem with $\text{BS}_1=0$. The attribution for ``strong" is then calculated as 
\begin{align*}
    \text{BS}_2 &= \frac{f(``\text{The company reported strong financial statement.}")}{2} \\
    & - \frac{f(``\text{The company reported financial statement.}")}{2} \\
    &+ \frac{f(``\text{The company reported strong earnings.}")}{2} \\
    &- \frac{f(``\text{The company reported earnings.}")}{2}.
\end{align*}
In other words, it takes the average difference of cases when ``strong" is and is not specified in the sentence. 
 We focus on the BShap since it has better theoretical properties than other variants of Shapley values in terms of preserving axioms, as discussed in \cite{sundararajan2020many}. As a result, we are more confident about using it for sectors with high stakes. 

 \begin{remark}
     In this study, we primarily focus on the baseline Shapley value rather than other popular attribution methods. This choice is motivated by our desire to use attribution methods that preserve the essential axioms for financial textual data, ensuring meaningful interpretation. Other methods may lack these desirable properties. For example, Integrated Gradients (IG) proposed by \cite{sundararajan2017axiomatic} is a notable alternative that has demonstrated strong interpretability in certain financial applications, such as asset pricing, as discussed in \cite{chen2024attribution}. Although IG can be applied to textual explanations, it may not be ideal for our purpose of producing consistent explanations aligned with financial knowledge. This limitation arises because IG requires computing derivatives of continuous features, assuming that the input is continuous and differentiable, conditions not naturally met by textual data. While it is possible to embed words into continuous vectors and apply IG, such transformations may not preserve the relative relationships among words, thereby potentially violating our desired axioms. In contrast, the baseline Shapley value inherently operates on discrete features, making it more suitable for our objectives.

 \end{remark}

\subsection{Multiclass Classification}

For multiclass classification, we adopt a one-vs-rest approach. Given a probability vector $\mathbf{f}$ produced by the model, we treat each component function $f_i$ individually and compute the Shapley attribution $\text{BS}_k(\overline{\mathbf{x}}, \mathbf{x}', f_i)$ for the $k$-th feature. Variations of this method are also possible. For instance, to understand the attribution that drives the prediction from class $j$ to class $i$, we can consider the difference between probability functions: $\text{BS}_k(\overline{\mathbf{x}}, \mathbf{x}', f_i - f_j)$. Due to the linearity property of the Shapley value, this difference can be expressed as:
\[
\text{BS}_k(\overline{\mathbf{x}}, \mathbf{x}', f_i - f_j) = \text{BS}_k(\overline{\mathbf{x}}, \mathbf{x}', f_i) - \text{BS}_k(\overline{\mathbf{x}}, \mathbf{x}', f_j).
\]

\section{Analyze Shapley Value for Financial Textual Data} \label{sec:Shap_theory}

When explaining LLMs in financial applications, it is crucial that the generated explanations align with established financial domain knowledge. To ensure this alignment, we introduce domain knowledge-inspired axioms that capture key financial principles and constraints. These axioms serve as formal criteria that any faithful explanation should satisfy, guiding the interpretation of model behavior in a way that is both meaningful and trustworthy within the financial context. Correspondingly, we provide an analysis of the Shapley value in this context, along with illustrative examples, to demonstrate how the proposed axioms influence and constrain the resulting explanations. The central principle behind our axiomatic framework is as follows: if a model exhibits certain behavior for which we have a clear expectation regarding how feature attributions should respond, then the attribution method must produce results consistent with that expectation. In this way, if an attribution method preserves the required axioms and performs reliably under these well-understood conditions, we can place greater trust in its outputs, not only in idealized scenarios but also, more importantly, in more complex and less structured real-world cases where such conditions may not be fully satisfied. All proofs in this Section are left in Appendix~\ref{sec:proof}.

\noindent \textbf{FinBERT}: To better illustrate our idea, we will use the FinBERT\footnote{\url{https://huggingface.co/ProsusAI/finbert}} \citep{araci2019finbert} as an example. FinBERT is a BERT-based language model tailored for the financial domain through fine-tuning on finance-specific texts. It is optimized for sentiment analysis and other NLP tasks in finance, offering greater accuracy than general-purpose models. Given a financial input, FinBERT outputs the probabilities of the text being positive, neutral, or negative. For simplicity, we focus on the expected sentiment score by computing the weighted average of these probabilities.
\begin{align*}
    \mathbb{E}[\text{sentiment}] = p_{\text{positive}} \cdot 1 + p_{\text{neutral}} \cdot 0 + p_{\text{negative}} \cdot -1.
\end{align*}
As a result, a positive expected sentiment score indicates a positive tone, and vice versa. The output score ranges from $-1$ to $1$.
Using FinBERT, we provide examples demonstrating how BShap can be used to explain attributions, illustrating how axioms are reflected in these explanations to enhance conceptual soundness.

\subsection{Axioms on First-order Main Effects}
Monotonicity is often employed to represent first-order effects of domain knowledge and is widely observed in financial contexts \citep{chen2023address}. For instance, a credit score is expected to decrease as the number of past-due payments increases.
 While monotonicity is typically defined globally for continuous functions, this assumption does not hold in our setting. Nevertheless, partial comparisons can still be made in many cases. In this section, we investigate the impact of the partially ordered version of the monotonicity axiom on baseline attribution methods. We begin by analyzing monotonicity with respect to a single feature, a property referred to as \emph{individual monotonicity}.
 Suppose $\alpha$ is the index of the feature we are interested in and $\neg$ its complement, then we partition the input $\*x$ into $\*x = (x_{\alpha},\*x_{\neg})$. The effect of individual monotonicity can then be captured by the following axiom. Intuitively, this axiom imposes a constraint on the sign of the attributions: if a feature consistently contributes positively to the prediction across all contexts, then it must receive a positive attribution.

\begin{definition}[\textbf{Average Individual Monotonicity Axiom}]
    Given an explicand $\overline{\*x} \in \mathbb{X}$, a baseline $\*x' \in \mathbb{X}$, a function $f \in \mathcal{F}$, and we partition $\*x = (x_{\alpha},\*x_{\neg})$. If $f(\overline{x}_{\alpha},\*x_{\neg}) \geq f(x_{\alpha}',\*x_{\neg})$ for all $\*x_{\neg}$, then
    \begin{align*}
        A_{\alpha}(\overline{\*x},\*x',f) \geq 0.
    \end{align*}
\end{definition}

\begin{theorem} \label{thm:AIM}
    BShap preserves the average individual monotonicity axiom. 
\end{theorem}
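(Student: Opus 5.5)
The argument works directly from the definition of BShap. Write the attribution of feature $\alpha$ as a weighted sum of marginal contributions,
\begin{align*}
    \text{BS}_{\alpha}(\overline{\*x},\*x',f) = \sum_{S \subseteq N\backslash\{\alpha\}} \frac{|S|!(|N|-|S|-1)!}{|N|!}\Big( f(\overline{x}_{\alpha}, \overline{\*x}_S; \*x'_{N\backslash (S\cup\{\alpha\})}) - f(x'_{\alpha}, \overline{\*x}_S; \*x'_{N\backslash (S\cup\{\alpha\})})\Big).
\end{align*}
Here $v(S\cup\{\alpha\})$ places $\overline{x}_\alpha$ in position $\alpha$, while $v(S)$ places the baseline $x'_\alpha$ there. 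The remaining coordinates are identical in both terms: they are given by the hybrid vector $\*x_{\neg}^{(S)} := (\overline{\*x}_S; \*x'_{N\backslash(S\cup\{\alpha\})})$.

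The key observation is that each $\*x_{\neg}^{(S)}$ is an admissible input for the complement features. Indeed, every coordinate $j\neq\alpha$ takes the value $\overline{x}_j$ or $x'_j$, and both lie in $\mathbb{X}_j$. The hypothesis of the average individual monotonicity axiom holds for \emph{all} $\*x_{\neg}$, so in particular for $\*x_{\neg}^{(S)}$. Hence each bracketed difference, $f(\overline{x}_\alpha,\*x_{\neg}^{(S)}) - f(x'_\alpha,\*x_{\neg}^{(S)})$, is nonnegative.

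Finally, the Shapley weights $\frac{|S|!(|N|-|S|-1)!}{|N|!}$ are strictly positive. A positively weighted sum of nonnegative terms is nonnegative, so $\text{BS}_\alpha \geq 0$, as required.

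There is no real obstacle. The only point needing care is confirming that the mixed inputs $\*x_{\neg}^{(S)}$ lie in the domain $\mathbb{X}_1\times\dots\times\mathbb{X}_n$ over which the hypothesis quantifies. This holds because the domain is a product set, so recombining coordinates from $\overline{\*x}$ and $\*x'$ stays inside it. No partial-order structure on $\mathbb{X}_\alpha$ is needed beyond the stated pointwise inequality.
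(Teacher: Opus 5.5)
Your proof is correct and is essentially the paper's argument. The paper also writes $\text{BS}_\alpha$ as the Shapley-weighted sum of the differences $f(\overline{x}_\alpha,\overline{\mathbf{x}}_S,\mathbf{x}'_{N\backslash(S\cup\alpha)}) - f(x'_\alpha,\overline{\mathbf{x}}_S,\mathbf{x}'_{N\backslash(S\cup\alpha)})$ and concludes that each term is nonnegative by the hypothesis; your explicit check that the hybrid inputs lie in the product domain is a detail the paper leaves implicit.
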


\begin{example} \label{eg:AIM}
    We consider explaining the sentiment score of the text
    \begin{align}
    ``\text{The company reported strong earnings for this quarter.}"
    \end{align}
    The expected sentiment score is approximately 0.94, aligning well with our intuition, as the statement is clearly very positive.
    Since the phrase ``the company reported'' conveys limited information on its own, we are not interested in its attributions.
    We are interested in attributions for 
    \begin{align*}
    \overline{x}_1 &= \text{``strong"}, \\
    \overline{x}_2 &= \text{``earnings"}, \\
    \overline{x}_3 &= \text{``for this quarter"}.
    \end{align*}
    For the baseline, we consider the statement
    \begin{align*}
    ``\text{The company reported financial statement for this year.}"
    \end{align*}
    The expected sentiment score is approximately -0.04, indicating that the models interpret the baseline statement as neutral. Correspondingly, the baselines are
    \begin{align*}
        x_1' &= \text{`` "}, \\ 
        x_2' &= \text{``financial statement"}, \\
        x_3' &= \text{``for this year"}.     
    \end{align*}
    We expect some monotonicity property for $x_1$. The word ``strong'' is generally regarded as highly positive in financial contexts. Therefore, we expect it to consistently contribute positively relative to a neutral baseline. As a result, the attribution for ``strong'' should be positive. By applying BShap, we obtain:
    \begin{align*}
        \mathbf{BS} = 
        \begin{bmatrix} 0.95 & 0.05 & -0.02 \end{bmatrix}^T.
    \end{align*}
    This result aligns with our intuition, as the positive sentiment of the sentence is primarily driven by the word ``strong,'' which carries the largest attribution. Furthermore, according to the average individual monotonicity axiom, the attribution of ``strong" is indeed positive.
\end{example}

In addition to the sign of an attribution, we are often interested in whether the attribution increases when the tone of a word becomes stronger. In other words, the axiom asks whether the relative importance of changes in individual features is faithfully reflected in their attributions.
 This intuition is captured by the following axiom.

\begin{definition}[\textbf{Demand Individual Monotonicity Axiom}] \label{thm:DIM}
    Given explicands $\overline{\*x},\*x^* \in \mathbb{X}$, a baseline $\*x' \in \mathbb{X}$, a function $f \in \mathcal{F}$, and we partition $\*x = (x_{\alpha},\*x_{\neg})$. If $f(\overline{x}_{\alpha},\*x_{\neg}) \geq f(x_{\alpha}^*,\*x_{\neg})$ for all $\*x_{\neg}$, then
    \begin{align*}
        A_{\alpha}((\overline{x}_{\alpha},\overline{\*x}_{\neg}),\*x',f) \geq A_{\alpha}((x^*_{\alpha},\overline{\*x}_{\neg}),\*x',f).
    \end{align*}
\end{definition}

\begin{theorem}
    BShap preserves the demand individual monotonicity axiom. 
\end{theorem}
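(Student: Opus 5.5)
We write out the baseline Shapley attribution of feature $\alpha$ explicitly for the two explicands $\overline{\*x}=(\overline{x}_\alpha,\overline{\*x}_\neg)$ and $\*x^*=(x^*_\alpha,\overline{\*x}_\neg)$, both with the same baseline $\*x'$, and compare them term by term. By the definition of BShap with $v(S)=f(\overline{\*x}_S;\*x'_{N\backslash S})$,
\begin{align*}
\text{BS}_\alpha(\overline{\*x},\*x',f) = \sum_{S \subseteq N\backslash\{\alpha\}} w_S \Big( f(\overline{x}_\alpha, \overline{\*x}_{S}, \*x'_{N\backslash (S\cup\{\alpha\})}) - f(x'_\alpha, \overline{\*x}_{S}, \*x'_{N\backslash (S\cup\{\alpha\})}) \Big),
\end{align*}
with weights $w_S = |S|!(|N|-|S|-1)!/|N|!$. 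The same formula holds for $\*x^*$, with $\overline{x}_\alpha$ replaced by $x^*_\alpha$ in the first term only.

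The key observation is that the two explicands agree on every coordinate except $\alpha$. Consequently, for each $S\subseteq N\backslash\{\alpha\}$, the complementary input $\*x_\neg := (\overline{\*x}_S, \*x'_{N\backslash(S\cup\{\alpha\})})$ is identical in both expansions. The second term in each summand, $f(x'_\alpha,\*x_\neg)$, therefore depends only on $S$ and not on which explicand is used.

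Subtracting the two expansions, the baseline terms cancel, and we obtain
\begin{align*}
\text{BS}_\alpha(\overline{\*x},\*x',f) - \text{BS}_\alpha(\*x^*,\*x',f) = \sum_{S \subseteq N\backslash\{\alpha\}} w_S \Big( f(\overline{x}_\alpha, \*x_\neg) - f(x^*_\alpha, \*x_\neg) \Big).
\end{align*}
The hypothesis of Definition~\ref{thm:DIM} gives $f(\overline{x}_\alpha,\*x_\neg)\ge f(x^*_\alpha,\*x_\neg)$ for every $\*x_\neg$, and in particular for each of these mixed inputs. Since the weights $w_S$ are nonnegative, the difference is nonnegative, which is the claim.

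The only step requiring care is confirming that the mixed inputs $(\overline{\*x}_S,\*x'_{N\backslash(S\cup\{\alpha\})})$ lie in the domain over which the hypothesis quantifies. They do, because each coordinate is drawn from $\mathbb{X}_i$. The argument is the same as the one for Theorem~\ref{thm:AIM}, which is recovered by setting $x^*_\alpha = x'_\alpha$: then $\text{BS}_\alpha(\*x^*,\*x',f)=0$ by the dummy property, since $x^*_\alpha$ equals the baseline and every marginal term vanishes.
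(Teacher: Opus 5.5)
Your proof is correct and follows the same approach as the paper's: expand both attributions over $S \subseteq N\backslash\{\alpha\}$, note that the baseline terms $f(x'_\alpha,\overline{\mathbf{x}}_S,\mathbf{x}'_{N\backslash(S\cup\{\alpha\})})$ cancel because the two explicands share $\overline{\mathbf{x}}_\neg$, and conclude from the termwise hypothesis and the nonnegativity of the Shapley weights. Your closing remark that setting $x^*_\alpha = x'_\alpha$ makes the second attribution vanish, and so recovers Theorem~\ref{thm:AIM}, is also correct.
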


\begin{example} \label{eg:DIM}
Following Example~\ref{eg:AIM}, now consider a similar statement:
\begin{align}
    \text{``The company reported stable earnings for this quarter."}
\end{align}
We consider the same setup of explicands and baseline, except that in this example $\overline{x}_1 = \text{``stable"}$.
The expected sentiment score is approximately 0.87, slightly lower than 0.94 in Example~\ref{eg:AIM}. This is reasonable since the statement remains positive but is less emphatic.

Applying BShap, we obtain:
\begin{align*}
    \mathbf{BS} = 
    \begin{bmatrix} 0.57 & 0.34 & -0.00 \end{bmatrix}^T.
\end{align*}
As expected, the word ``stable'' contributes the most. However, unlike in Example~\ref{eg:AIM}, the word ``earnings'' gains greater importance. This makes sense because ``stable'' by itself is somewhat vague; without explicitly referencing ``earnings,'' the tone tends toward neutral. Thus, there is a complex interaction between the words ``stable'' and ``earnings.''

We further look into this result. The average individual monotonicity guarantees that the word ``stable" receives a positive attribution. Furthermore, since the word ``strong" is clearly more positive than the word ``stable", the attribution of ``strong" is larger, due to the demand individual monotonicity axiom. 

\end{example}

\subsection{Axioms on Second-order Main Effects}

Second-order derivatives may provide additional insights into the model's behavior. The diminishing marginal effect is one of the most common phenomena \citep{gupta2018diminishing,pya2015shape}. As a simple example, consider that people are generally happy when given an apple to eat. While receiving more apples would typically make them happier, the additional increase in happiness tends to diminish. The following axiom is introduced to capture this phenomenon in the context of partially ordered features. As an intuitive motivation, we consider discrete analogues of first-order and second-order derivatives to characterize the behavior of attribution methods.

\begin{definition}[\textbf{Diminishing Marginal Effect Axiom}]
    Given explicands $\overline{\*x},\*x^* \in \mathbb{X}$, a baseline $\*x' \in \mathbb{X}$, a function $f \in \mathcal{F}$, and we partition $\*x = (x_{\alpha},\*x_{\neg})$. If $f(\overline{x}_{\alpha},\*x_{\neg}) \geq f(x_{\alpha}^*,\*x_{\neg}) \geq f(x_{\alpha}',\*x_{\neg})$ and $f(x_{\alpha}',\*x_{\neg}) - 2f(x^*_{\alpha},\*x_{\neg}) + f(\overline{x}_{\alpha},\*x_{\neg}) \leq 0$ for all $\*x_{\neg}$, then 
    \begin{align*}
        0 & \leq A_{\alpha} ((\overline{x}_{\alpha},\overline{\*x}_{\neg}),\*x',f) - A_{\alpha}((x^*_{\alpha},\overline{\*x}_{\neg}),\*x',f) \\
        &\leq A_{\alpha}((x^*_{\alpha},\overline{\*x}_{\neg}),\*x',f)-A_{\alpha}((x_{\alpha}',\overline{\*x}_{\neg}),\*x',f).
    \end{align*}
\end{definition}

\begin{theorem} \label{thm:DME}
    BShap preserves the diminishing marginal effect axiom.
\end{theorem}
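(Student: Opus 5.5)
The plan is to write $\text{BS}_\alpha$ explicitly as a weighted average of marginal differences, with nonnegative weights that do not depend on the value placed in slot $\alpha$, and then apply the hypotheses pointwise inside the sum.

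For any value $z \in \mathbb{X}_\alpha$, set $\*z = (z,\overline{\*x}_{\neg})$. From the definition of BShap with $v(S) = f(\*z_S;\*x'_{N\backslash S})$,
\begin{align*}
\text{BS}_\alpha(\*z,\*x',f) = \sum_{S \subseteq N\backslash\{\alpha\}} w_S \Big( f\big(z, (\overline{\*x}_{\neg})_S; \*x'_{\neg\backslash S}\big) - f\big(x'_\alpha, (\overline{\*x}_{\neg})_S; \*x'_{\neg\backslash S}\big) \Big),
\end{align*}
where
\begin{align*}
w_S = \frac{|S|!(|N|-|S|-1)!}{|N|!} \geq 0 .
\end{align*}
The key observation is that the context $\*x_\neg^S := ((\overline{\*x}_{\neg})_S;\*x'_{\neg\backslash S})$ depends only on $S$, on $\overline{\*x}_\neg$ and on $\*x'$. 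It does not depend on $z$. Hence $\text{BS}_\alpha(\*z,\*x',f) = \sum_S w_S\big(f(z,\*x_\neg^S) - f(x'_\alpha,\*x_\neg^S)\big)$. This is the same representation that underlies Theorem~\ref{thm:AIM} and the demand individual monotonicity result.

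Now apply this formula with $z = \overline{x}_\alpha$, $z = x^*_\alpha$ and $z = x'_\alpha$. For $z = x'_\alpha$ every summand vanishes, so $A_\alpha((x'_\alpha,\overline{\*x}_\neg),\*x',f)=0$. Write $D_1(S) = f(\overline{x}_\alpha,\*x_\neg^S) - f(x^*_\alpha,\*x_\neg^S)$ and $D_2(S) = f(x^*_\alpha,\*x_\neg^S) - f(x'_\alpha,\*x_\neg^S)$. The baseline terms then cancel in the differences, giving
\begin{align*}
A_\alpha(\overline{\*x}) - A_\alpha(\*x^*) = \sum_S w_S D_1(S), \qquad A_\alpha(\*x^*) - A_\alpha(\*x') = \sum_S w_S D_2(S).
\end{align*}
The first hypothesis gives $D_1(S)\ge 0$ for every context, since the hypotheses hold for all $\*x_\neg$ and in particular for each $\*x_\neg^S$. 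Together with $w_S\ge0$ this yields the left inequality. The second-difference hypothesis reads $D_1(S) - D_2(S) \le 0$. Summing it with the weights $w_S$ gives the right inequality.

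There is no real obstacle here. The only point needing care is to check that the hybrid inputs $\*x_\neg^S$ lie in the domain over which the hypotheses are assumed, which holds because the hypotheses are quantified over all $\*x_\neg$. One must also confirm that the baseline term $f(x'_\alpha,\*x^S_\neg)$ is identical across the three explicands, so that it cancels; it is, because the explicands differ only in coordinate $\alpha$.
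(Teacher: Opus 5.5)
Your proof is correct and follows essentially the same route as the paper: write $\text{BS}_\alpha$ as a nonnegatively weighted sum of marginal differences over hybrid contexts that do not depend on the value in slot $\alpha$. Then apply the first-difference hypothesis (the paper cites the demand individual monotonicity result for this step) and the second-difference hypothesis term by term; you also state explicitly that $A_\alpha((x'_\alpha,\overline{\*x}_\neg),\*x',f)=0$, which the paper uses only implicitly when it identifies its final bound with $\text{BS}_\alpha((x^*_\alpha,\overline{\*x}_{\neg}))$.
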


\begin{example}
    Following Example~\ref{eg:AIM} and \ref{eg:DIM}, recall the corresponding attributions for ``strong" and ``stable" are 0.94 and 0.57, respectively. Starting from a neutral expression `` " to the word ``stable", the text sentiment becomes significantly more positive, resulting in a large positive attribution assigned to ``stable". However, since ``stable" already conveys a strong positive sentiment, the marginal increase in attribution when moving from ``stable" to ``strong" is expected to be smaller due to the diminishing marginal effect. The diminishing marginal effect axiom ensures that this property is indeed preserved in the attributions and only produces a marginal increment of 0.37.

\end{example}

Analogous to the diminishing marginal effect, an increasing marginal effect may also arise in certain contexts. In contrast to the former, the increasing marginal effect implies that each successive increment results in a larger and larger change in attribution. The following axiom is introduced to capture this behavior.

\begin{definition}[\textbf{Increasing Marginal Effect Axiom}] \label{def:IME_axiom}
    Given explicands $\overline{\*x},\*x^* \in \mathbb{X}$, a baseline $\*x' \in \mathbb{X}$, a function $f \in \mathcal{F}$, and we partition $\*x = (x_{\alpha},\*x_{\neg})$. If $f(\overline{x}_{\alpha},\*x_{\neg}) \geq f(x_{\alpha}^*,\*x_{\neg}) \geq f(x_{\alpha}',\*x_{\neg})$ and $f(x_{\alpha}',\*x_{\neg}) - 2f(x^*_{\alpha},\*x_{\neg}) + f(\overline{x}_{\alpha},\*x_{\neg}) \geq 0$ for all $\*x_{\neg}$, then 
    \begin{align*}
        0 & \leq A_{\alpha}((x^*_{\alpha},\overline{\*x}_{\neg}),\*x',f) - A_{\alpha}((x_{\alpha}',\overline{\*x}_{\neg}),\*x',f) \\
        & \leq 
        A_{\alpha} ((\overline{x}_{\alpha},\overline{\*x}_{\neg}),\*x',f) - A_{\alpha}((x^*_{\alpha},\overline{\*x}_{\neg}),\*x',f).
    \end{align*}
\end{definition}

\begin{theorem} \label{thm:IME}
    BShap preserves the increasing marginal effect axiom. 
\end{theorem}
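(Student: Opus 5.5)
The plan is to write $\text{BS}_\alpha$ explicitly as a weighted sum of discrete differences in the $\alpha$ coordinate, and observe that all three explicands $(\overline{x}_\alpha,\overline{\*x}_\neg)$, $(x^*_\alpha,\overline{\*x}_\neg)$ and $(x'_\alpha,\overline{\*x}_\neg)$ share the same non-$\alpha$ components. For an explicand $\*z=(z_\alpha,\overline{\*x}_\neg)$, the definition of BShap with $v(S)=f(\*z_S;\*x'_{N\backslash S})$ gives
\begin{align*}
\text{BS}_\alpha(\*z,\*x',f)=\sum_{S\subseteq N\backslash\{\alpha\}} w_S\,\bigl(f(z_\alpha,\*y^S_\neg)-f(x'_\alpha,\*y^S_\neg)\bigr),
\end{align*}
where $w_S=\frac{|S|!(|N|-|S|-1)!}{|N|!}\ge 0$ and $\*y^S_\neg=(\overline{\*x}_S;\*x'_{N\backslash(S\cup\{\alpha\})})$. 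The key point is that the context $\*y^S_\neg$ depends only on $S$ and on $\overline{\*x}_\neg,\*x'$, and not on $z_\alpha$. So the same weights and the same contexts appear for all three explicands.

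Next, take differences of these expressions. Writing $g(\cdot)=f(\cdot,\*y^S_\neg)$ for fixed $S$, we get
\begin{align*}
A_\alpha(\overline{x}_\alpha)-A_\alpha(x^*_\alpha)&=\sum_S w_S\bigl(g(\overline{x}_\alpha)-g(x^*_\alpha)\bigr),\\
A_\alpha(x^*_\alpha)-A_\alpha(x'_\alpha)&=\sum_S w_S\bigl(g(x^*_\alpha)-g(x'_\alpha)\bigr).
\end{align*}
Here $A_\alpha(x'_\alpha)=0$, because every summand vanishes when $z_\alpha=x'_\alpha$.

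Now apply the hypotheses at each context $\*x_\neg=\*y^S_\neg$; this is allowed because they hold for all $\*x_\neg$. The ordering hypothesis $g(x^*_\alpha)\ge g(x'_\alpha)$ makes every summand of the second difference nonnegative, which gives the left inequality. Convexity, in the form $g(x'_\alpha)-2g(x^*_\alpha)+g(\overline{x}_\alpha)\ge0$, says $g(\overline{x}_\alpha)-g(x^*_\alpha)\ge g(x^*_\alpha)-g(x'_\alpha)$ termwise. Summing with the nonnegative weights $w_S$ gives the right inequality. The proof of Theorem~\ref{thm:DME} runs the same way with the inequalities reversed.

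The only real obstacle is the bookkeeping. One must check that replacing $\overline{x}_\alpha$ by $x^*_\alpha$ or $x'_\alpha$ leaves the coalition contexts $\*y^S_\neg$ unchanged, and that in the degenerate case where $\overline{x}_j=x'_j$ for some $j$, the contexts still range over valid points of $\mathbb{X}$ at which the hypotheses apply. Both follow directly from the fact that the hypotheses are assumed for all $\*x_\neg$.
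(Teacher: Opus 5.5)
Your proposal is correct and is essentially the paper's argument. The paper proves the diminishing-marginal-effect case by writing $\text{BS}_\alpha$ as the Shapley-weighted sum of $\alpha$-differences over the same coalition contexts. It gets the first inequality from demand individual monotonicity, and the second by comparing summands termwise via the second-difference hypothesis, implicitly using $\text{BS}_\alpha((x'_\alpha,\overline{\mathbf{x}}_\neg))=0$; it then calls the increasing case analogous. You carry out that analogous case directly and make the vanishing baseline attribution explicit, which is a slightly cleaner presentation of the same proof.
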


\subsection{Axioms on Different Features}

So far, our focus has been on the effect of a single feature. We now turn to the analysis of relative importance between potentially different features. The concept of pairwise monotonicity was first introduced in \cite{gupta2020multidimensional, chen2022monotonic, chen2023address}, which addresses the inherent relative importance among different features. 
For instance, in credit scoring, consider two features: the number of past-due payments of less than three months and those of more than three months. Since longer delinquencies typically indicate greater risk, each additional past-due payment of longer duration should have a larger impact on the credit score. When generating explanations, such relative importance should be respected. To capture this, we introduce the following axiom, which states that if one feature consistently produces a larger impact on the model output than another when perturbed, then it must receive a greater attribution.

\begin{definition}[\textbf{Pairwise Monotonicity Axiom}]
    Given an explicand $\overline{\*x} \in \mathbb{X}$, a baseline $\*x' \in \mathbb{X}$, a function $f \in \mathcal{F}$, and we partition $\*x = (x_{\alpha},x_{\beta},\*x_{\neg})$. If $f(\overline{x}_{\alpha},x'_{\beta},\*x_{\neg}) \geq f(x_{\alpha}',\overline{x}_{\beta},\*x_{\neg})$ for all $\*x_{\neg}$, then 
    \begin{align*}
    A_{\alpha}(\overline{\*x},\*x',f) \geq A_{\beta}(\overline{\*x},\*x',f).
    \end{align*}
\end{definition}

\begin{theorem} \label{thm:PM}
    BShap preserves the pairwise monotonicity axiom. 
\end{theorem}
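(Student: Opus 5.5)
The plan is to expand both $\text{BS}_\alpha$ and $\text{BS}_\beta$ as Shapley sums and pair their terms so that the hypothesis applies term by term. Write $N=\{\alpha,\beta\}\cup R$, where $R$ indexes the remaining features $\*x_{\neg}$, and set $w(s)=\frac{s!(n-s-1)!}{n!}$. Every $S\subseteq N\setminus\{\alpha\}$ either contains $\beta$ or does not, so we can write it as $T$ or $T\cup\{\beta\}$ with $T\subseteq R$. For $T\subseteq R$, abbreviate $g(a,b,T)=f(a,b,\overline{\*x}_T;\*x'_{R\setminus T})$. Then
\begin{align*}
\text{BS}_\alpha &= \sum_{T\subseteq R}\Big[w(|T|)\big(g(\overline{x}_\alpha,x'_\beta,T)-g(x'_\alpha,x'_\beta,T)\big) + w(|T|+1)\big(g(\overline{x}_\alpha,\overline{x}_\beta,T)-g(x'_\alpha,\overline{x}_\beta,T)\big)\Big],
\end{align*}
and $\text{BS}_\beta$ has the same form with the roles of $\alpha$ and $\beta$ swapped.

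The key step is to subtract the two sums coalition by coalition, for each fixed $T$. The terms with weight $w(|T|+1)$ both contain $g(\overline{x}_\alpha,\overline{x}_\beta,T)$, and these cancel. Both also contain a term $g(x'_\alpha,x'_\beta,T)$ that appears negatively, and these cancel as well. What remains is
\begin{align*}
\text{BS}_\alpha-\text{BS}_\beta=\sum_{T\subseteq R}\big(w(|T|)+w(|T|+1)\big)\big(g(\overline{x}_\alpha,x'_\beta,T)-g(x'_\alpha,\overline{x}_\beta,T)\big).
\end{align*}
To confirm this, note that for fixed $T$ the $\alpha$-sum contributes $+w_0 g(\overline{x}_\alpha,x'_\beta)$ and $-w_1 g(x'_\alpha,\overline{x}_\beta)$. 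The $\beta$-sum contributes $+w_0 g(x'_\alpha,\overline{x}_\beta)$ and $-w_1 g(\overline{x}_\alpha,x'_\beta)$. Taking the difference therefore gives $(w_0+w_1)\big(g(\overline{x}_\alpha,x'_\beta)-g(x'_\alpha,\overline{x}_\beta)\big)$, as claimed.

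The hypothesis of the pairwise monotonicity axiom holds for every $\*x_{\neg}$. In particular it holds for each hybrid $\*x_{\neg}=(\overline{\*x}_T;\*x'_{R\setminus T})$, so each bracket in the last display is nonnegative. The weights are positive, so $\text{BS}_\alpha\ge \text{BS}_\beta$.

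The main obstacle is the bookkeeping in the cancellation. One has to check that the $\beta$-present terms of $\text{BS}_\alpha$ and the $\alpha$-present terms of $\text{BS}_\beta$ carry the same weight $w(|T|+1)$, and this follows from $|T\cup\{\beta\}|=|T\cup\{\alpha\}|$. The rest is routine.
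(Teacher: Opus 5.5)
Your proof is correct and follows essentially the same route as the paper. You pair each coalition in the $\alpha$-sum with its $\alpha\leftrightarrow\beta$ swap in the $\beta$-sum, cancel the common terms $g(\overline{x}_\alpha,\overline{x}_\beta,T)$ and $g(x'_\alpha,x'_\beta,T)$, and apply the hypothesis at each hybrid $\*x_{\neg}=(\overline{\*x}_T;\*x'_{R\setminus T})$; the only cosmetic difference is that the paper keeps the cases $\beta\in S$ and $\beta\notin S$ as separate terms (each with weight $w(|S|)$), whereas you merge them into a single bracket with weight $w(|T|)+w(|T|+1)$.
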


\begin{example}\label{eg:PM}
We consider the statement
\begin{align}
    \text{``Tesla reported stable gross margin and earnings."}
\end{align}
The expected sentiment score is approximately 0.91, indicating that it is a very positive news. We focus on the explained features 
\begin{align*}
    \overline{x}_1 &= \text{``stable''}, \\
    \overline{x}_2 &= \text{``gross margin''}, \\
    \overline{x}_3 &= \text{``earnings''}.  
\end{align*}
We use the baseline features as
\begin{align*}
    \overline{x}_1 &= \text{`` "}, \\
    \overline{x}_2 &= \text{``figure"}, \\
    \overline{x}_3 &= \text{``item"}.
\end{align*}
Therefore, the baseline statement is
\begin{align}
    \text{``Tesla reported figure and item."}
\end{align}
The expected score for the baseline is about \(-0.02\), which is neutral.

The calculated BShap values are
\begin{align*}
    \mathbf{BS} = \begin{bmatrix} 
    0.78 & 0.12 & 0.04 
    \end{bmatrix}^T.
\end{align*}
By the pairwise monotonicity axiom, we observe that ``gross margin'' is significantly more important than ``earnings'' for Tesla. This reflects Tesla’s current business context: the company is in a rapid growth and expansion phase, investing heavily in factories, R\&D, and new products. These investments lead to high operating expenses and capital expenditures, which can depress net earnings or even cause short-term losses.
Gross margin measures the difference between revenue and cost of goods sold, showing how much Tesla earns on each vehicle before overhead and other expenses. It indicates how efficiently Tesla manufactures and sells its products, a critical factor for scaling profitably.
Earnings include components such as stock-based compensation, interest, taxes, and one-time charges that may not reflect the core business performance. As a result, Tesla’s earnings can fluctuate substantially due to these factors, even if core product profitability improves.
Investors often view improving gross margins as evidence that Tesla is gaining pricing power, reducing production costs, or optimizing its product mix, all of which are important signals for sustainable, long-term profit growth.
\end{example}

\subsection{Axioms on Features across Different Models}

Pairwise monotonicity addresses the relative importance between different features within the same model. However, in many cases, we are also interested in assessing the relative importance of the same feature across different models, particularly to understand which model is more sensitive to a given input. For example, while companies regularly issue earnings announcements, the impact of these announcements can vary across firms. In such cases, we may be interested in understanding how the term ``earnings'' contributes differently to the predictions for different companies. In more general scenarios involving multiclass classification, it is natural to evaluate the relative importance of a specific feature across different classes.
Consider two models that share the same inputs, we are interested in comparing the attribution of a common feature $\overline{x}_{\alpha}$ across the two models to determine which model assigns it greater importance.
This notion is formalized in the following axiom.

\begin{definition}[\textbf{First-order Monotonic Dominance Axiom}]
    Given an explicand $\overline{\mathbf{x}} \in \mathbb{X}$, a baseline $\mathbf{x}' \in \mathbb{X}$, functions $f, g \in \mathcal{F}$, and we consider a partition of the input as $\mathbf{x} = (x_{\alpha}, \mathbf{x}_{\neg})$.
     If $f(\overline{x}_{\alpha},\*x_{\neg}) -f(x_{\alpha}',\*x_{\neg}) \geq g(\overline{x}_{\alpha},\*x_{\neg}) - g(x'_{\alpha},\*x_{\neg})$ for all  $\*x_{\neg}$, then
    \begin{align*}
        A_{\alpha} (\overline{\*x},\*x',f) \geq A_{\alpha} (\overline{\*x},\*x',g). 
    \end{align*}
\end{definition}

\begin{theorem}{\label{thm:FMD}}
    BShap preserves the first-order dominance axiom. 
\end{theorem}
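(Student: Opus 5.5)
The plan is to use the linearity of the Shapley value, then reduce the claim to Theorem~\ref{thm:AIM} applied to the difference model. Set $h := f - g \in \mathcal{F}$. BShap's characteristic function $v_h(S) = h(\overline{\*x}_S; \*x'_{N\backslash S})$ equals $v_f(S) - v_g(S)$ for every coalition $S$. Linearity in Theorem~\ref{thm:SH_uniq} then gives
\begin{align*}
\text{BS}_{\alpha}(\overline{\*x},\*x',f) - \text{BS}_{\alpha}(\overline{\*x},\*x',g) = \text{BS}_{\alpha}(\overline{\*x},\*x',h).
\end{align*}
The target inequality is therefore equivalent to $\text{BS}_{\alpha}(\overline{\*x},\*x',h) \geq 0$.

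Next I rewrite the hypothesis. For every $\*x_{\neg}$,
\begin{align*}
f(\overline{x}_{\alpha},\*x_{\neg}) - f(x'_{\alpha},\*x_{\neg}) \geq g(\overline{x}_{\alpha},\*x_{\neg}) - g(x'_{\alpha},\*x_{\neg})
\end{align*}
is the same as $h(\overline{x}_{\alpha},\*x_{\neg}) \geq h(x'_{\alpha},\*x_{\neg})$. This is exactly the premise of the average individual monotonicity axiom for $h$. Theorem~\ref{thm:AIM} then yields $\text{BS}_{\alpha}(\overline{\*x},\*x',h)\geq 0$, which completes the argument.

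To keep the proof self-contained, I would also verify the sign directly from the definition. Write
\begin{align*}
\text{BS}_{\alpha}(h) = \sum_{S\subseteq N\backslash\{\alpha\}} \frac{|S|!(|N|-|S|-1)!}{|N|!}\bigl(h(\overline{x}_{\alpha},\overline{\*x}_S,\*x'_{N\backslash (S\cup\{\alpha\})}) - h(x'_{\alpha},\overline{\*x}_S,\*x'_{N\backslash (S\cup\{\alpha\})})\bigr).
\end{align*}
In each term the complement block is one particular choice of $\*x_{\neg}$, namely $(\overline{\*x}_S,\*x'_{N\backslash(S\cup\{\alpha\})})$. The hypothesis makes every bracket nonnegative, and all the weights are positive, so the sum is nonnegative.

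I do not expect a real obstacle. The one point to check is that $h = f-g$ lies in $\mathcal{F}$, i.e.\ that $\mathcal{F}$ is closed under subtraction. If that is not granted, the direct termwise computation above avoids the issue entirely, since it only uses that the BShap formula is linear in the function values.
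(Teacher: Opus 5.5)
Your proposal is correct and follows essentially the same route as the paper: the paper also uses linearity to write $\text{BS}_{\alpha}(\overline{\*x},\*x',f)-\text{BS}_{\alpha}(\overline{\*x},\*x',g)=\text{BS}_{\alpha}(\overline{\*x},\*x',f-g)$, then notes that every weighted summand is nonnegative by the dominance hypothesis. That is exactly your direct termwise check, and routing it through Theorem~\ref{thm:AIM} for $h=f-g$ is just a repackaging of the same step. Your remark that closure of $\mathcal{F}$ under subtraction is not really needed, since $v_{f-g}=v_f-v_g$ holds coalition by coalition, is a fair point that the paper leaves implicit.
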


\begin{example}\label{eg:SM}
We consider two statements:
\begin{align}
& \text{``Amazon reported stable earnings."}, \\
& \text{``Coca-Cola reported stable earnings."}.
\end{align}
These two statements convey the same content but pertain to two different companies. Our objective is to compare the effect of the term ``earnings'' across these two companies.
Their sentiment scores are 0.61 and 0.91, respectively, indicating positive news in both cases. Notably, the phrase ``stable earnings'' carries a more positive attribution for Coca-Cola. This is understandable since Coca-Cola operates in a mature, defensive industry where investors prioritize consistency, dividends, and low volatility over rapid growth. In this setting, ``stable earnings'' imply resilience, a desirable attribute. In contrast, the market expects technology companies to deliver growth, innovation, and outperformance.

Following this, suppose we consider 
\begin{align*}
\overline{x}_1 &= \text{``stable''}, \\ \overline{x}_2 &= \text{``earnings''}   
\end{align*}
 as explicands. At the same time, we take the baselines 
 \begin{align*}
 x_1' &= \text{`` ''}, \\
 x_2' &= \text{``financial statement''}.    
 \end{align*}
 We then obtain:
\begin{align*}
    \mathbf{BS}_{\text{Amazon}} = \begin{bmatrix} 0.38 \\ 0.36 \end{bmatrix}, \quad
    \mathbf{BS}_{\text{Coca-Cola}} = \begin{bmatrix} 0.52 \\ 0.61 \end{bmatrix}.
\end{align*}
By the First-order Monotonic Dominance property, it is expected that ``earnings'' carries greater importance for Coca-Cola than for Amazon.
\end{example}

Last, it is natural to ask whether we can compare different features across different models using the Shapley value. For instance, suppose both Amazon and Tesla report financial indicators such as gross margin and earnings. Can we meaningfully compare the importance of gross margin to Amazon with the importance of earnings to Tesla?
To reason about such comparisons, we follow the principle of first-order monotonic dominance, which leads us to the following axiom.

\begin{definition}[\textbf{Symmetric Monotonicity Axiom}]
    Given an explicand $\overline{\mathbf{x}}  \in \mathbb{X}$, a baseline $\mathbf{x}' \in \mathbb{X}$, functions $f, g \in \mathcal{F}$, and we partition $\*x = (x_{\alpha},x_{\beta},\*x_{\neg})$. If $f(\overline{x}_{\alpha},x_{\beta}',\*x_{\neg}) -f(x_{\alpha}',x_{\beta}',\*x_{\neg}) \geq g(x_{\alpha}',\overline{x}_{\beta},\*x_{\neg}) - g(x'_{\alpha},x'_{\beta},\*x_{\neg})$ for all $\*x_{\neg}$, then
    \begin{align*}
        A_{\alpha} (\overline{\*x}',\*x',f) \geq A_{\beta} (\overline{\*x},\*x',g). 
    \end{align*}
\end{definition}

Unfortunately, the answer is negative: we cannot directly compare the attributions of different features across different models using the Shapley value. To see this, consider a simple two-dimensional setting with functions $f(x_1, x_2)$ and $g(x_1, x_2)$. According to the BShap formulation, the attributions are given by:
\begin{align*}
    \text{BS}_{1}(f) &= \frac{f(\overline{x}_1,x_2') - f(x_1',x_2')}{2} + \frac{f(\overline{x}_1,\overline{x}_2) - f(x_1',\overline{x}_2)}{2}, \\
    \text{BS}_{2}(g) &= \frac{g(x_1',\overline{x}_2) - g(x_1',x_2')}{2} + \frac{g(\overline{x}_1,\overline{x}_2) - g(\overline{x}_1,x_2')}{2}.
\end{align*}
In the spirit of comparing the relative importance of different features, we might attempt to compare the first terms of each expression, such as $\frac{f(\overline{x}_1,x_2') - f(x_1',x_2')}{2}$ with $\frac{g(x_1',\overline{x}_2) - g(x_1',x_2')}{2}$, since they share the same baseline input $(x_1',x_2')$. However, the second terms, $\frac{f(\overline{x}_1,\overline{x}_2) - f(x_1',\overline{x}_2)}{2}$ and $\frac{g(\overline{x}_1,\overline{x}_2) - g(\overline{x}_1,x_2')}{2}$, cannot be meaningfully compared, as values of $f(\overline{x}_1,\overline{x}_2)$ and $g(\overline{x}_1,\overline{x}_2)$ matter. In other words, for discrete cases, the relative importance of different features across different models cannot be precisely specified.
This example highlights the inherent difficulty in comparing attributions of different features across different models, and thus we should not expect such comparisons to be valid in general.

\subsection{An Example of Multiclass Classification for Sector Classifications}

Here, we provide another example of sector classification \citep{dzuyo2025linking,li2024unified} as a representative example of general text classification and recommendation tasks. Sector classification plays a crucial role in organizing and understanding vast amounts of textual data by categorizing information into meaningful industry groups. For instance, identifying that a sentence describes the technology sector enables more targeted analysis, efficient information retrieval, and personalized recommendations. Accurate sector classification supports investors, analysts, and businesses in monitoring industry trends, making informed decisions, and tailoring services to sector-specific needs. Moreover, in applications such as financial news aggregation, automated reporting, and market research, distinguishing content by sector improves the relevance and precision of insights, ultimately enhancing the overall effectiveness of data-driven strategies.

Unlike sentiment analysis, which typically involves a few ordered classes (e.g., positive, neutral, negative for sentiment analysis), sector classification is more complex due to the presence of multiple unordered categories. Due to this, we focus on the relative importance of different features within the same class as well as the relative importance of a single feature across different sectors. This allows us to demonstrate the expressiveness and flexibility of the Shapley value in handling multiclass, unordered classification tasks.

\subsubsection{LLM Model}

We use \textit{BART-Large-MNLI (BLM)}\footnote{\url{https://huggingface.co/facebook/bart-large-mnli}}, a powerful transformer-based language model developed by Meta (formerly Facebook AI), built on the BART \citep{lewis2019bart} architecture and fine-tuned on the Multi-Genre Natural Language Inference (MNLI) dataset. As a sequence-to-sequence model, BART combines the strengths of BERT (for understanding) and GPT (for generation), and this version is specifically optimized for natural language inference tasks. The model takes a pair of sentences, typically a premise and a hypothesis, and predicts whether the hypothesis is entailed by, contradicts, or is neutral with respect to the premise. A particularly notable use of BART-Large-MNLI  is in zero-shot text classification, where classification labels are converted into natural language hypotheses, allowing the model to classify text without task-specific fine-tuning. This versatility has made it a popular choice for many NLP applications requiring flexible and generalizable language understanding.

\subsubsection{An Example of Explanations}

We consider the example of a financial technology (FinTech) company, Upstart. As a brief description, we use the following statement:
\begin{align}
    \text{``The company is a leading AI lending marketplace."}
\end{align}
For simplicity, we restrict our classification to six industry sectors: technology, finance, healthcare, consumer goods, energy, and others. Given Upstart's business model, we expect it to have substantial attribution in both the technology and finance sectors. Indeed, BLM assigns $56\%$ to technology and $33\%$ to finance, indicating that BLM identifies it as a FinTech company, with a stronger leaning toward the technology sector.

To analyze attributions, we focus on the words 
\begin{align*}
    \overline{x}_1 &= \text{``AI"}, \\
    \overline{x}_2 &= \text{``lending"},
\end{align*}
which are likely the key contributors to its classification as a FinTech firm. As a baseline, we set 
\begin{align*}
    x_1' &= \text{`` "}, \\
    x_2' &= \text{`` "},
\end{align*}
resulting in the baseline statement:
\begin{align*}
    \text{``The company is a leading marketplace."}
\end{align*}
Under this baseline, BLM assigns $24\%$ to technology and $16\%$ to finance.

We now apply BShap to assess the contributions of ``AI" and ``lending" to each sector. For this two-dimensional case, we provide further details regarding the function values associated with different subsets. For the technology sector, the function values are:
\begin{align*}
    f_{\text{tech}}(``\text{The company is a leading marketplace.}") &= 0.24, \\
    f_{\text{tech}}(``\text{The company is a leading AI marketplace.}") &= 0.82, \\
    f_{\text{tech}}(``\text{The company is a leading lending marketplace.}") &= 0.05, \\
    f_{\text{tech}}(``\text{The company is a leading AI lending marketplace.}") &= 0.56, \\
\end{align*}
Correspondingly, the BShap attributions are:
\begin{align*}
    \textbf{BS}_{\text{technology}} = \begin{bmatrix}
        0.55 & -0.22
    \end{bmatrix}^T.
\end{align*}
For the finance sector, the function values are: 
\begin{align*}
    f_{\text{finance}}(``\text{The company is a leading marketplace.}") &= 0.16, \\
    f_{\text{finance}}(``\text{The company is a leading AI marketplace.}") &= 0.03, \\
    f_{\text{finance}}(``\text{The company is a leading lending marketplace.}") &= 0.82, \\
    f_{\text{finance}}(``\text{The company is a leading AI lending marketplace.}") &= 0.33.
\end{align*}
Correspondingly, the BShap attributions are:
\begin{align*}
    \textbf{BS}_{\text{finance}} = \begin{bmatrix}
        -0.31 & 0.48
    \end{bmatrix}^T.
\end{align*}

For the technology class, the word ``AI'' contributes positively to the prediction, which aligns with the expectation that ``AI'' is clearly associated with technology companies. In contrast, the word ``lending'' reduces the prediction score for the technology class, as it is more commonly linked to finance companies. These sign patterns are consistent with the average individual monotonicity axiom. 
According to the pairwise monotonicity axiom, the contribution magnitude of ``AI'' exceeds that of ``lending''. This is reasonable, as the concept of ``lending'' often involves advanced technology, which mitigates its negative influence on the technology class prediction. 
A similar pattern is observed for the finance class. 
Additionally, according to the first-order monotonic dominance axiom, the importance of the same feature can be meaningfully compared across different models. We observe that ``AI'' has a stronger impact on the technology class, while ``lending'' contributes more significantly to the finance class, both observations aligning with our expectations. Supported by these axioms, we gain greater confidence in interpreting model outputs when using Shapley values.

\subsubsection{Applying the Shapley Value to Model Diagnostics}
We present another example illustrating the use of the Shapley value in a case where the model fails to perform as expected.
We consider a hedge fund, the Voleon Group, which extensively employs AI technologies in its trading activities. The following text describes Voleon:
\begin{center}
    ``\text{A quantitative investment management firm that applies artificial intelligence} \\
    \text{and machine learning to its trading strategies.}''
\end{center}
Clearly, Voleon group belongs to the finance sector. However, given its strong reliance on AI technologies, we would expect a nontrivial contribution from the technology sector. Surprisingly, the BLM model assigns $64\%$ probability to the technology sector and only $14\%$ to the finance sector—an evidently unreasonable result. Shapley values can help us uncover the reasoning behind this prediction.

For this text, we identify the following explicands:
\begin{align*}
    \overline{x}_1 &= \text{``quantitative investment management''}, \\
    \overline{x}_2 &= \text{``artificial intelligence and machine learning''}, \\
    \overline{x}_3 &= \text{``trading strategies''}.
\end{align*}
We consider these to be the key phrases determining the sector classification. For comparison, we construct a neutral baseline text:
\begin{center}
    ``A firm that applies service to its operations.''
\end{center}
Accordingly, the corresponding baseline components are:
\begin{align*}
    x_1' &= \text{`` ''}, \\
    x_2' &= \text{``service''}, \\
    x_3' &= \text{``operations.''}
\end{align*}
Applying BShap yields the following attributions:
\begin{align*}
    \textbf{BS}_{\text{tech}} &= \begin{bmatrix} -0.09 & 0.50 & 0.0175 \end{bmatrix}^T, \\
    \textbf{BS}_{\text{finance}} &= \begin{bmatrix} 0.11 & -0.20 & 0.06 \end{bmatrix}^T.
\end{align*}
From these BShap values, we observe that the phrase ``artificial intelligence and machine learning'' is the dominant factor driving the classification toward the technology sector and away from finance. While this is understandable, it is also notable that the phrases ``quantitative investment management'' and ``trading strategies'' contribute very little. These results suggest that BLM is relatively insensitive to some financial terminology. As such, fine-tuning the model on finance-specific textual data may be necessary to improve its accuracy in sector classification.
{For comparison, we also provide results from the state-of-the-art ChatGPT 5.2 in Appendix}~\ref{sec:sec_GPT}, {which demonstrate that ChatGPT produces more reasonable outcomes in this example.}

\subsection{{Summary}}

An overview of the domain knowledge-inspired axioms is illustrated in Figure~\ref{fig:diagram}. Most axioms are formulated for models involving a single feature, under which we establish that properties such as individual monotonicity and diminishing/increasing marginal effects are preserved. In models with multiple features, it is further possible to compare their relative importance within the same model. However, when considering different models, comparability is restricted to the relative importance of the same feature.

Beyond the current application, Shapley value–based explanations and interpretability techniques for LLMs can be utilized across diverse domains. For instance, extensive research has explored career path recommendations using LLMs \citep{jeon2025letters,wu2024exploring}, assisting individuals in selecting optimal next roles for career advancement. Shapley values help identify the key factors driving these recommendations. Another example is credit scoring, where narrative data can serve as features, particularly in low-income regions with limited financial records. In such cases, LLMs leverage these narratives to improve prediction accuracy \citep{teixeira2023enhancing,idbenjra2024investigating}, while Shapley values clarify the influence of specific narrative elements. Additional applications include fraud detection \citep{chang2024exposing}, algorithmic trading \citep{iacovides2024finllama}, and financial document consistency analysis \citep{wang2025assessing}, among others. Each of these benefits from Shapley values by enhancing trust and supporting better decision-making in complex AI systems. 

\begin{figure}[h]
    \includegraphics[scale=0.6]{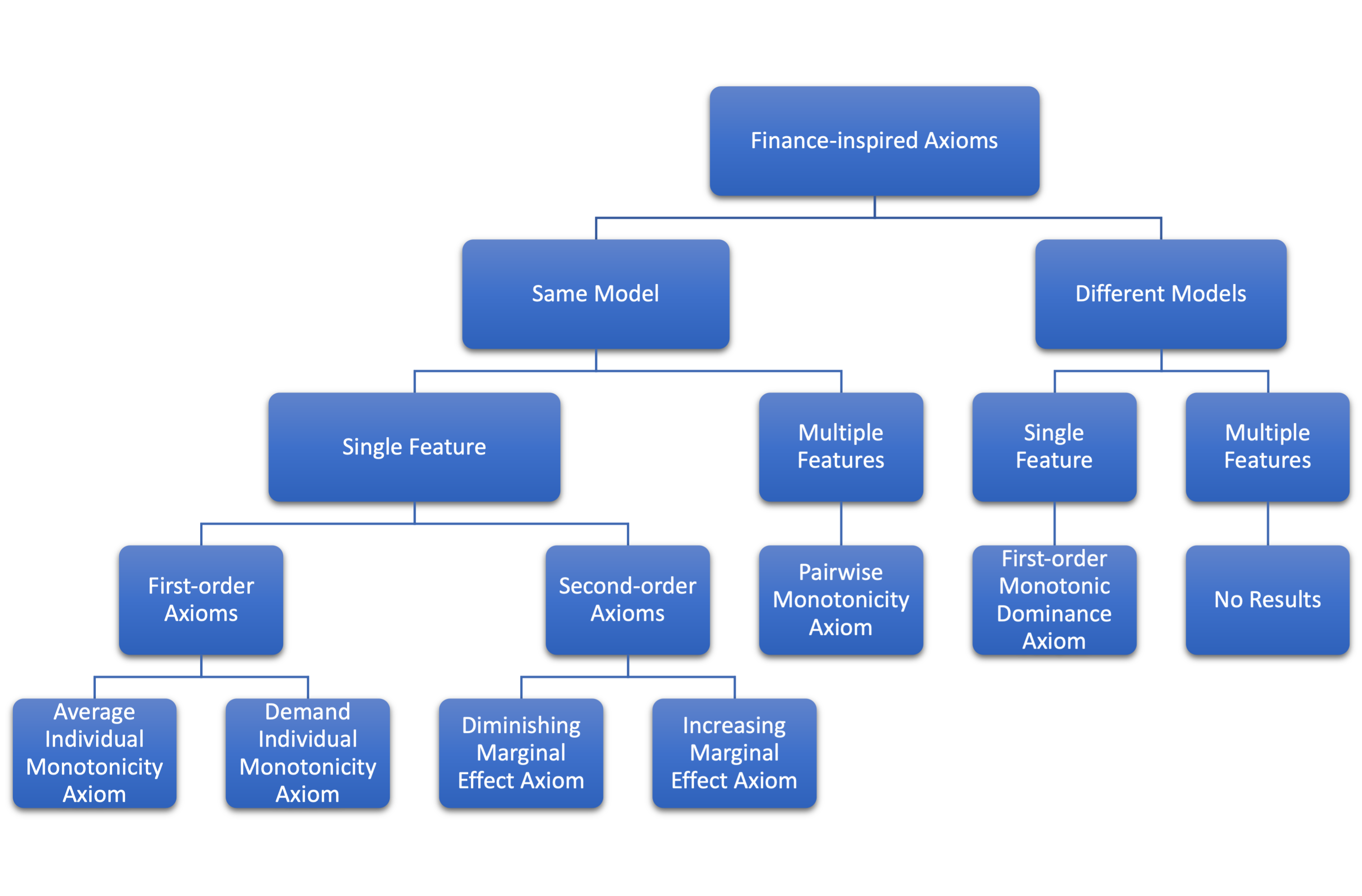}
    \caption{A diagram for finance-inspired axioms}
    \label{fig:diagram}
\end{figure}

\section{{Large-Scale Risk Attribution Based on Form 10-K}}

In Section~3, we analyzed the baseline Shapley value through simple yet intuitive illustrative examples. Building on these examples, we showed how the baseline Shapley value satisfies and preserves a collection of desirable domain knowledge-inspired axioms. 
In this section, we present a large-scale text-based experiment to demonstrate the practical applicability of the baseline Shapley value.

In recent years, textual disclosures have gained prominence as an alternative source of firm-level financial market information \citep{yang2015twitter,siano2025news}, complementing traditional return-based measures, and one direction is to quantify firm-level risk using textual data \citep{gupta2025using,zhu2025uncovering}. Conventional approaches quantify risk using the statistical properties of stock returns, such as volatility, beta, or downside risk, which are inherently backward-looking and reflect realized market outcomes. In contrast, textual risk measures extracted from corporate disclosures, particularly the risk factors section of regulatory filings, capture forward-looking assessments of potential adverse events as articulated by firm management. These narrative disclosures encode information about the nature, scope, and sources of uncertainty faced by the firm that may not yet be priced into returns or observable in historical data. As a result, text-based risk measures provide a structurally different view of risk, emphasizing anticipated vulnerabilities and exposure channels rather than ex-post market fluctuations, and have become an increasingly valuable tool for understanding firm risk in environments characterized by rapid change and evolving uncertainty.

\subsection{Form 10-K Item 1A — Risk Factors Section}

We focus on the Form 10-K in this study. The Form 10-K is the primary annual disclosure document that publicly traded firms are required to file with the U.S. Securities and Exchange Commission (SEC). It provides a comprehensive overview of a firm’s business operations, financial condition, and risk profile, including audited financial statements, management’s discussion and analysis, and detailed narrative disclosures. Unlike periodic market data, the form 10-K reflects management’s structured assessment of the firm’s performance and future uncertainties at a fixed reporting date, subject to regulatory standards and legal liability. As such, the 10-K serves as a central information source for investors, regulators, and researchers seeking to evaluate firm fundamentals and forward-looking risks.

Item 1A of the Form 10-K, the \textbf{risk factors section}, contains the firm’s formal disclosure of material risk factors that could adversely affect its business, financial condition, or future performance. This section is designed to provide forward-looking information by requiring management to identify and describe the most significant sources of uncertainty facing the firm, including operational, financial, regulatory, and macroeconomic risks. Unlike quantitative risk measures derived from historical outcomes, the risk factors section reflects management’s qualitative assessment of potential adverse scenarios and their transmission channels. Because these disclosures are subject to regulatory requirements and legal liability, the risk factors section represents a structured and economically meaningful narrative of firm-specific risk, making it a natural and increasingly important data source for textual analyses of corporate risk.

The risk factors section is typically organized hierarchically to present material risks in a structured and interpretable manner. In practice, the section may begin with an optional risk factors summary if the total disclosure exceeds fifteen pages, providing a concise overview of the most significant risks. The core of the risk factors section is organized under broad risk categories, each designated by a descriptive \textbf{risk heading} such as “Operational Risks,” “Financial Risks,” or “Legal Risks.” Within each heading, firms describe individual \textit{risk factors}, either as separate paragraphs or subheadings, explaining the nature of the risk, potential adverse consequences, and relevant transmission channels. Supporting details, including illustrative examples, regulatory context, or references to financial statements, may accompany each risk factor to enhance clarity.

LLMs provide a promising tool for systematically quantifying the qualitative information contained in the risk factors section. Unlike traditional approaches that rely on manual coding or dictionary-based text analysis, LLMs can process unstructured narrative text at scale, capture nuanced contextual meaning, and assess the relative severity or materiality of each disclosed risk. 
By inputting the text of risk factors along with an appropriate prompt, an LLM can generate risk scores that reflect both the inherent magnitude of the risk and its potential impact on firm performance. Such an approach enables researchers and practitioners to leverage textual disclosures as a forward-looking measure of uncertainty, complementing traditional return-based or accounting-based risk metrics, and can reveal risk exposures that may not yet be reflected in market prices or historical financial outcomes.

\subsection{Risk Attributions}

Risk attribution has long been a central topic in risk management. A substantial literature studies how individual risk factors contribute to overall portfolio risk; see, for example, \cite{meucci2005risk, rosen2010risk, shalit2021shapley, hagan2023portfolio}. Building on this line of work, \cite{tarashev2016risk} investigate risk attribution in the context of banking systems and employ it to construct measures of banks’ systemic importance. More recently, \cite{chen2025explaining} applied the Shapley value within a general framework for risk attribution and demonstrated its usefulness in applications such as factor modeling and option pricing. Overall, risk attribution provides a principled approach to decomposing aggregate risk into individual components, thereby enhancing our understanding of the sources of risk and supporting more effective risk management.

Unlike classical risk attribution problems, which typically take continuous variables or random variables as inputs, our setting uses textual data without preprocessing as the primary input. Risk attribution for textual data is inherently more challenging than in classical risk attribution settings. In traditional applications, additional structure, such as linear factor models, may be available to guide the attribution and provide economic insight. In contrast, when the inputs are textual data, the mapping from individual risk narratives to total risk is intrinsically nonlinear, context-dependent, and unstructured. As a result, classical linear attribution approaches are generally inadequate, necessitating more flexible attribution frameworks that can accommodate complex interactions among textual risk factors.
 Nevertheless, the Shapley value can be naturally extended to this context to provide meaningful risk attributions. 
 Such a decomposition is especially valuable for managers, regulators, and investors, as it facilitates the prioritization of attention, more efficient resource allocation, and targeted risk mitigation or hedging strategies based on the underlying sources of potential adverse outcomes, rather than relying solely on aggregated and retrospective risk measures.

 Given the scale and complexity of this setting, we focus primarily on the interpretation of the results, rather than on how the axioms are reflected in the analysis. Nevertheless, we emphasize that preserving the proposed domain knowledge-inspired axioms is essential for producing meaningful explanations. For example, because the content under each risk heading describes material risks, its attribution should always be nonnegative, as guaranteed by the average individual monotonicity axiom. An attribution method that assigns a negative contribution to a risk heading would yield an unreasonable explanation. Furthermore, if the content under a risk heading is revised to describe a more severe risk, its attribution should increase accordingly. This desirable property is captured by the demand individual monotonicity axiom. Overall, these domain knowledge-inspired axioms ensure that the explanations produced by the baseline Shapley value remain consistent with financial domain knowledge.

\subsection{Experiment Setup}

We now present a precise mathematical formulation of the problem and outline our proposed approach.
Mathematically, given the text $\overline{\*x}$ from the risk factors section, we consider the risk score by LLM as $f(\overline{\*x})$. We request the range of $f(\mathbf{x})$ within $[0,10]$, where a value of $0$ indicates no risk, and a value of $10$ indicates extremely high risk.
We treat all words within each risk heading as a single feature, as each heading conveys a clear financial meaning, making it readily interpretable for financial purposes. Accordingly, we partition the input as
\begin{align*}
    \*x = (\*x_1, \dots, \*x_n),
\end{align*}
where $\*x_i$ corresponds to the content of the $i$th risk heading. We then apply the baseline Shapley value to $f(\overline{\*x})$ to obtain the risk attribution $A_i$ associated with each heading. Specifically, we compute BShap over 30 independent runs and report the Monte Carlo mean together with the corresponding 95\% confidence interval. The Linearity axiom ensures that Shapley values are compatible with Monte Carlo estimation, as the Shapley value of the mean outcome equals the mean of the Shapley values, thereby providing theoretically consistent support for averaging attribution results across simulations.


\subsubsection{Discussion of Baselines and Prompt}

Here, we discuss the choice of baseline text. Two baselines are considered in the experiment:
\begin{enumerate}
    \item  A natural choice of baseline values in this setting is the empty input $\*x' = \{ \}$.
For long text inputs, using an empty input as the baseline is a natural and widely adopted choice. Long documents convey information gradually through sentences and paragraphs, so an empty input provides a clear and intuitive notion of “no information” against which the contribution of different parts of the text can be measured. The prompt corresponds to this choice of the baseline for the calculation of the characteristic function is provided in Appendix~\ref{sec:prompt1}.
    \item If the goal is to understand the dynamic changes in a firm's risk structure, then the data from a previous time point naturally serves as the baseline. That is, denoting the explicand at the year $t$ by $\overline{\*x}_t$, we may choose $\*x' = \overline{\*x}_s$ for some $s < t$.
     By comparing the explicand with this baseline, we effectively assess the relative changes in the content given the function. This approach is particularly natural for time-series data. The prompt for this choice of the baseline is provided in Appendix~\ref{sec:prompt2}. 
\end{enumerate}
Both baselines are valid, differing only in the purpose they serve: one emphasizes absolute attribution, while the other highlights relative attribution. Unlike the example in Section 3, where multiple word-level baseline choices were possible, for long textual data these two baselines represent the most natural and practical options.

\subsubsection{Discussion of Computational Cost}

Recent analysis\footnote{\url{https://corpgov.law.harvard.edu/2023/12/03/sec-risk-factors-disclosure-analysis/?utm_source=chatgpt.com}}
 of S$\&$P 500 firms’ Item 1A disclosures provide some summary statistics three years following the SEC’s 2020 reforms, which introduced requirements emphasizing materiality, clearer organization, and the reduction of boilerplate risk disclosures. On average, firms report five risk headings, with sections spanning approximately 13.5 pages. The most computationally expensive part of calculating the Shapley value is evaluating the characteristic function, which has a size of $2^n$, and each evaluation requires calling the LLM to compute the risk score. As a result, the overall computational cost of the Shapley value for an averaged file is approximately $\mathcal{O}(C \cdot 2^5 )$, which is computationally feasible. Here, $C$ denotes the cost of computing the risk score using the LLM, which depends both on the complexity of the model and on how the calculation and prompt are structured.

\subsubsection{LLM Model} \label{sec:GPT}

Quantifying financial risk from textual data is a challenging task. Since our primary objective is not to develop a specialized LLM for this purpose, we instead rely on an existing state-of-the-art GPT-based solver to address this problem. 
We interact with the GPT model through its application programming interface (API). Specifically, textual inputs, including task instructions, prompts, and optional system constraints, are sent to the model as structured requests. 
The API processes these inputs using a selected GPT model and returns generated outputs in a standardized response format. 
This API-based interaction enables scalable, automated inference, allowing the model to be seamlessly integrated into downstream pipelines for tasks such as risk scoring, text analysis, and decision support.
It should be noted that the results could be further improved through approaches such as fine-tuning the model. 
In addition, performing subsequent regression analyses on information extracted from Form~10-K filings could potentially enhance the robustness and accuracy of the estimated risk. 
Nevertheless, the primary focus of this experiment is to study the attributions of the estimated risk, rather than to optimize the accuracy of the risk measurement itself. Accordingly, we directly employ an existing GPT-based solver.

Specifically, the ChatGPT-5.2 API is used throughout the experiments. The ChatGPT-5.2 API from OpenAI is the latest generation of their LLMs, offering developers smarter, more contextually aware, and more flexible natural language understanding and generation capabilities compared to previous versions; it supports enhanced reasoning controls, longer context handling, and advanced instruction-following across use cases like conversational agents, content creation, complex data analysis, and automation, and includes optimized variants such as ChatGPT-5.2 and ChatGPT-5.2-Pro for deeper reasoning or professional workloads, all while maintaining compatibility with existing API workflows for interactive apps and batch tasks.

Experiments were performed on Google Colab using the default CPU runtime (2 vCPUs, approximately 12~GB RAM, Python~3.10), a widely used and readily accessible setup that allows most users to reproduce the results without specialized hardware or configuration. In particular, no GPUs, TPUs, or proprietary software dependencies are required to run the experiments.

\subsection{Examples}
We consider three examples here: Silicon Valley Bank, Wells Fargo \& Company, and Crocs, Inc. For the first two, we use empty baselines to examine their risk structures; for the third, we use the previous year's content as the baseline to analyze the relative changes in its risk structure. We discuss how to compare the resulting attributions under a unified risk heading in the end.

\subsubsection{An Example of Silicon Valley Bank}
As a demonstrating example, we look at the form 10-K of Silicon Valley Bank (SVB) in 2022 \footnote{\url{https://www.sec.gov/Archives/edgar/data/719739/000071973923000021/sivb-20221231.htm#ibb4dd73a1d3f4bff944b5d35fd2c5e2a_22}}. SVB was a prominent U.S.-based commercial bank that primarily served technology startups, venture capital firms, and high-growth companies. Established in 1983, SVB became a key financial partner for the innovation ecosystem, offering a range of services including lending, cash management, and deposit accounts tailored to the unique needs of technology-driven businesses. In March 2023, Silicon Valley Bank collapsed and was placed under regulatory receivership. The collapse was largely attributed to a combination of concentrated client exposure, a highly illiquid balance sheet, and a rapid increase in interest rates. In particular, the bank held a substantial portfolio of long-term fixed-income securities that declined sharply in value as interest rates rose, while simultaneously experiencing a sudden surge in deposit withdrawals from its concentrated startup client base. This asset--liability mismatch and lack of sufficient liquidity ultimately triggered a classic bank run, leading to the bank's failure. We select its Form~10\mbox{-}K filed in 2022, as the company collapsed in the following year, which allows us to better study the factors underlying its failure. Notably, interest rate risk and liquidity risk were emphasized in this filing, and we therefore examine whether the baseline Shapley value is able to capture the correspondingly large risk implied by these risk factors.

The risk factors section of SVB spans approximately 20 pages and contains six risk headings.
A summary of each risk heading can be found in Section~\ref{sec:SVB_risk_summary}. Note that the summary is provided by SVB and appears at the beginning of the section. More details follow the summary in the section. Interested readers are referred to SVB's original Form 10-K for the complete content.

We next apply the BShap to derive risk attributions for SVB. The Average computation time of the Shapley value took roughly 9.73 seconds, demonstrating that calculating the Shapley value for risk factor analysis on 10-K filings is computationally feasible. 

Overall, ChatGPT assigns an average score of 10.0 to SVB, indicating a very high level of risk reflected in SVB's 10-K filing. This assessment is consistent with the subsequent collapse of the SVB in the following year. The decomposition of the score using BShap, rounded to two decimal places is provided in Table~\ref{tab:risk_SVB}.
\begin{table}[htbp]
\centering
\caption{Risk attributions for Silicon Valley Bank in 2022 computed using BShap. We report the mean attributions along with their $95\%$ confidence intervals.
 }
\label{tab:risk_SVB}
\begin{tabular}{lcc}
\toprule
\textbf{Risk Heading} & \textbf{Mean} & \textbf{95$\%$ CI} \\
\midrule
Credit Risks & 4.04 & (3.92,4.15) \\
Market and Liquidity Risks & 2.75 & (2.65,2.85)\\
Operational Risks & 1.02 & (0.94,1.10) \\
Legal, Compliance and Regulatory Risks & 1.01 & (0.93,1.08) \\
Strategic, Reputational and Other Risks & 0.96 &  (0.89,1.03) \\
General Risk Factors & 0.22 & (0.20,0.25) \\
\bottomrule
\end{tabular}
\end{table}
Notably, the risks were dominated by \textit{Credit Risks} and \textit{Market and Liquidity Risks}. This is reasonable, as SVB clearly disclosed the significant pressure from credit and market/liquidity exposures. Credit risk is a significant risk because the bank’s loan portfolio is volatile and concentrated in borrowers with uncertain and cyclical financing needs, making credit losses difficult to predict and potentially exceeding existing loss reserves, especially during adverse economic conditions. In particular, SVB highlighted its potential exposure to rising interest rates and emphasized that liquidity risk could pose a major challenge in the \textit{Market and Liquidity Risk}.
As an illustrative example, the following risk factor from the \textit{Market and Liquidity Risk} section discusses this issue in greater detail:
\begin{quote}
    ``Our interest rate spread may further decline in the future. Any material reduction in our interest rate spread could have a material adverse effect on our business, results of operations, or financial condition.''
\end{quote}
Additional details in this risk factor provide further supporting evidence. For instance, it explicitly states:
\begin{quote}
    ``The Federal Reserve raised benchmark interest rates throughout 2022 and may continue to raise interest rates in response to economic conditions, particularly inflationary pressures. Continued increases in interest rates to combat inflation or otherwise may have unpredictable effects or minimize gains on our interest rate spread.''
\end{quote}
These factors were key contributors to SVB's collapse in the following year and were effectively captured by ChatGPT, with BShap appropriately attributing a large portion of the overall risk to them.  

The remaining risk categories are substantially smaller. A careful reading of the form 10-K shows that these risks are more general in nature and applicable to most firms. For instance, the \textit{General Risk Factors} heading begins with:
\begin{quote}
    ``If we fail to retain key employees or recruit new employees, or if we are unable to effectively manage the growth of our employee base, our growth and results of operations could be adversely affected.''
\end{quote}
While this represents a potential risk, it is broad and contains little specific information. Accordingly, such statements contribute minimally to the overall risk assessment. ChatGPT was able to recognize this, and baseline Shapley value correctly assigned a relatively small attribution.

\subsubsection{An Example of Wells Fargo $\&$ Company}

We consider another example with Wells Fargo \& Company (WFC) based on its 2024 10-K filing\footnote{\url{https://www.wellsfargo.com/assets/pdf/about/investor-relations/sec-filings/2024/exhibit-13.pdf}}. We focus on the 2024 filing, as the fourth-quarter in 2025 of the WFC's report is not yet available.
 Headquartered in San Francisco, California, WFC is a leading diversified financial services firm. Established in 1852, it offers a wide range of banking, investment, mortgage, and consumer and commercial finance services to clients across the United States and globally. The company operates through several business segments, including community banking, wholesale banking, and wealth and investment management, serving millions of individuals, small businesses, and large corporations. By assets, WFC ranks among the largest banks in the United States. We include WFC as an additional example alongside SVB, as the two banks, while operating within the same banking industry, exhibit distinct risk structures. This allows us to assess whether baseline Shapley value is able to capture and reflect these differences.

The Risk Factors section in WFC's 10-K spans roughly 14 pages and includes seven primary risk headings. 
A summary of risk factors are provided in Appendix~\ref{sec:WF_risk_summary}. Notably, the structure of the risk headings shows both overlaps and differences between WFC and SVB. WFC’s headings differ from those of SVB because its historical operational and regulatory challenges dominate its disclosures, whereas SVB emphasizes credit, market, and liquidity risks associated with its startup-focused balance sheet. This illustrates that risk headings reflect firm-specific constraints and experiences, rather than solely following regulatory taxonomy.

The average calculation of BShap took approximately 13.29 seconds. Overall, ChatGPT assigns WFC an average score of 10.0, indicating a very high level of risk. The decomposition of this risk score using BShap, rounded to two decimal place, is provided in Table~\ref{tab:risk_WFC}.
\begin{table}[htbp]
\centering
\caption{Risk attributions of the Wells Fargo $\&$ Company in 2024 computed using BShap. We report the mean attributions along with their $95\%$ confidence intervals. }
\label{tab:risk_WFC}
\begin{tabular}{lcc}
\toprule
\textbf{Risk Heading} & \textbf{Mean} & \textbf{95$\%$ CI} \\
\midrule
Economical, Financial Markets, Interest Rates, and Liquidity Risks & 2.32 & (2.23,2.40) \\
Regulatory Risks & 2.11 & (2.05,2.18) \\
Credit Risks & 2.16 & (2.11,2.21) \\
Operational, Strategic, and Legal Risks & 1.59 & (1.54,1.63) \\
Mortgage Business Risks & 1.16 & (1.11,1.20) \\
Competitive Risks & 0.36 & (0.33,0.39)\\
Financial Reporting Risks & 0.30 & (0.28,0.32) \\
\bottomrule
\end{tabular}
\end{table}
It is immediately noticeable that there is a significant difference in the structure of risk between WFC and SVB: WFC has a large exposure to Regulatory and Legal Risks, which exist in \textit{Regulatory Risks} and \textit{Operational, Strategic, and Legal Risks}. This exposure largely stems from major past incidents, most notably the 2016 fake-accounts scandal and the subsequent enforcement actions. These events exposed systemic weaknesses in governance, internal controls, and compliance processes, prompting the Federal Reserve, Consumer Financial Protection Bureau, and Office of the Comptroller of the Currency 
 to impose long-running consent orders, fines, and growth restrictions such as the asset cap. Because these regulatory actions directly constrain the bank’s ability to expand its balance sheet, deploy capital, and pursue business strategies, they represent persistent, binding risks. Consequently, WFC’s 10-K explicitly emphasizes regulatory and compliance risk, demonstrating how historical operational failures translate into ongoing vulnerability to regulatory scrutiny and potential penalties. Baseline Shapley value analysis is able to capture this firm-specific risk, highlighting its significance in WFC’s overall risk profile.

\subsubsection{An Example of Crocs, Inc}

We examine Crocs, Inc. for the years 2020 and 2021 to analyze the relative changes in its risk structure. We focus on 2020 because the company was significantly affected by the COVID-19 pandemic, allowing us to observe how its risk profile evolved in the subsequent year.
 Crocs, Inc. is a global footwear company best known for its lightweight, comfortable, and distinctive foam clogs. Founded in 2002 and headquartered in Niwot, Colorado, the company designs, develops, and markets casual footwear for men, women, and children under the Crocs brand, and has expanded its portfolio with acquisitions such as HEYDUDE in 2022. Crocs products are sold in more than 90 countries through a combination of retail stores, e-commerce platforms, and wholesale distributors, making it a prominent player in the casual footwear market.

To analyze the relative differences in the company over time in a time series context, a natural approach is to treat the content from 2020 as the baseline and the content from 2021 as the explicand.


The Risk Factors section have about 13 and 17 pages in 2020 and 2021, respectively. Six risk headings are considered.
A summary of the risk factors, with additional details, is provided in the Appendix~\ref{sec:prompt2}. Crocs clearly exhibits a different risk structure compared to WFC and SVB, reflecting the fact that they operate in entirely different industry sectors.

The average calculation of BShap took approximately 2 minutes and 8.6 seconds. 
This procedure required more time, as it employs a different prompt that invokes the API more frequently than the empty-baseline case.
Overall, ChatGPT assigns an average relative score of 5.33, indicating a moderate change of risk. The decomposition of this risk score using BShap, rounded to two decimal places, is provided in Table~\ref{tab:risk_CI}.
\begin{table}[htbp]
\centering
\caption{Risk attributions of the Crocs, Inc in 2021 computed using BShap. We report the mean attributions along with their $95\%$ confidence intervals. }
\label{tab:risk_CI}
\begin{tabular}{lcc}
\toprule
\textbf{Risk Heading} & \textbf{Mean} & \textbf{95$\%$ CI} \\
\midrule
Risks Related to Our Products & 0.11 & (0.01,0.21) \\
Risks Related to the Economy & -0.70 & (-0.85,-0.54) \\
Risks Related to Our Supply Chain & 1.04 & (0.94,1.14) \\
Risks Related to International Operations & 0.24 & (0.16,0.32) \\
Risks Specific to Our Company and Strategy & 1.84 & (1.77,1.92) \\
Financial and Accounting Risks & 2.80 & (2.73,2.87) \\
\bottomrule
\end{tabular}
\end{table}
First, we note that the largest changes in risk arise from the \textit{Risks Specific to Our Company and Strategy} and \textit{Financial and Accounting Risks} headings. Upon careful examination, the \textit{Risks Specific to Our Company and Strategy} heading includes the acquisition of HEYDUDE, which introduces additional content related to the acquisition. Consequently, there is a clear increase in the associated risk. Second, related to this acquisition, the resulting substantial indebtedness could adversely affect Crocs' business, financial condition, and operations, thereby increasing risk under the \textit{Financial and Accounting Risks} heading. Third, there is a decrease in risk under the \textit{Risks Related to the Economy} heading. This heading primarily focuses on COVID-19-related risks, which were mitigated in 2021 relative to 2020, although some impact remains. Overall, the BShap interpretation aligns well with the narrative behind the changes in the form.

\subsection{Check for Consistency and Fidelity} 

In this section, we evaluate baseline Shapley value using both consistency and fidelity. These two perspectives provide complementary ways to assess the validity of an explainable ML method.

\subsubsection{Consistency}

Consistency evaluates whether perturbations induce significant changes in the explanation \citep{alvarez2018towards}. Ensuring such consistency is crucial for producing meaningful explanations, particularly when inputs undergo perturbations. In the context of using an LLM API, a natural concern arises: since model outputs may vary across calls, how stable are the resulting explanations? To address this, we focus on \textit{rank consistency}. 

Specifically, let $\text{R}_i$ denote the rank of absolute value of the $i$th attribution in a given explanation, and let $\text{R}^*_i$ denote the rank when features are ranked based on their absolute value of averaged attributions. We then define the rank consistency measure as the mean absolute difference between these ranks:
\begin{align*}
    \text{Co}_i = \mathbb{E}\Big[ |\text{R}_i - \text{R}^*_i| \Big].
\end{align*}
Intuitively, a large value of $\text{Co}_i$ indicates that the rank is unstable across perturbations, whereas a small value indicates that the rank is consistent.

The results of three examples are reported in Table~\ref{tab:consistency}. Overall, we find that the rankings are highly consistent: when the mean attributions are well separated, the resulting feature ranks are very stable, whereas minor rank fluctuations occur when the estimated attributions are close in magnitude. The observed fluctuations are acceptable, as when attribution magnitudes are close, enforcing a strict relative ordering is not crucial. Our primary concern is ensuring that the relative ranking remains stable when the attributions differ substantially.

\begin{table}[htbp]
\centering
\caption{Consistency check of BShap-based risk attributions across SVB, WFC, and Crocs. The mean attribution is used here. }
\label{tab:consistency}
\begin{tabular}{lcc}
\toprule
\textbf{Risk Heading of SVB} & \textbf{ Attribution} & $\mathbf{\mathbb{E}\Big[ |\text{R}_i - \text{R}^*_i| \Big].}$ \\
\midrule
Credit Risks & 4.04 & 0.0 \\
Market and Liquidity Risks & 2.75 & 0.0 \\
Operational Risks & 1.02 & 0.97  \\
Legal, Compliance and Regulatory Risks & 1.01 & 0.67 \\
Strategic, Reputational and Other Risks & 0.96 &  1.03 \\
General Risk Factors & 0.22 & 0.0 \\
\midrule
\textbf{Risk Heading of WFC} & &  \\
\midrule
Economical, Financial Markets, & & \\
Interest Rates, and Liquidity Risks & 2.32 & 0.50 \\
Regulatory Risks & 2.11 & 0.77 \\
Credit Risks & 2.16 & 0.53 \\
Operational, Strategic, and Legal Risks & 1.59 & 0.00 \\
Mortgage Business Risks & 1.16 & 0.00 \\
Competitive Risks & 0.36 & 0.30\\
Financial Reporting Risks & 0.30 & 0.30 \\
\midrule
\textbf{Risk Heading of Crocs} & & \\
\midrule
Risks Related to Our Products & 0.11 & 0.40 \\
Risks Related to the Economy & -0.70 & 0.53 \\
Risks Related to Our Supply Chain & 1.04 & 0.13  \\
Risks Related to International Operations & 0.24 & 0.67  \\
Risks Specific to Our Company and Strategy & 1.84 & 0.07 \\
Financial and Accounting Risks & 2.80 & 0.00 \\
\bottomrule
\end{tabular}
\end{table}

\subsubsection{Fidelity}
Fidelity quantifies how faithfully an explanation captures the behavior of the underlying model. An explanation is said to have high fidelity if features identified as important exert a substantial influence on the model output, while features identified as unimportant have only a negligible effect. To evaluate fidelity, we conduct feature-removal tests by examining the change in the model prediction after removing either the most or the least important features, following the idea of \cite{samek2016evaluating,yuan2022explainability}. In this sense, fidelity assesses the truthfulness of an explanation, independent of interpretability or human plausibility.

Formally, given an index $k$, we consider the following two fidelity measures:
\begin{itemize}
    \item $\mathrm{Fi}_+(k)$ measures the effect of removing the $k$ most important features, defined as
    \[
        \mathrm{Fi}_+(k) = \mathbb{E}\Big[ \left| f(\overline{\*x}) - f(\*x'_{1:k}, \overline{\*x}_{k+1:n}) \right| \Big],
    \]
    where the expectation is taken over independent runs.
    \item $\mathrm{Fi}_-(k)$ measures the effect of removing the $k$ least important features, defined as
    \[
        \mathrm{Fi}_-(k) = \mathbb{E}\Big[ \left| f(\overline{\*x}) - f(\overline{\*x}_{1:n-k}, \*x'_{n+1-k:n}) \right| \Big].
    \]
\end{itemize}
Intuitively, we expect both scores to monotonically increase with $k$. Furthermore, we expect $\mathrm{Fi}_+(k) > \mathrm{Fi}_-(k)$, which indicates that the features identified as important indeed have a stronger influence on the model's output.

The result is provided in Figure~\ref{fig:fidelity}. Overall, we observe that removing the important risk factors has a substantially larger impact on the model output than removing nonimportant risk factors, which validates the attributions obtained using BShap.

\begin{figure}[htbp]
\centering
\begin{minipage}{0.32\textwidth}
    \centering
    \includegraphics[width=\textwidth]{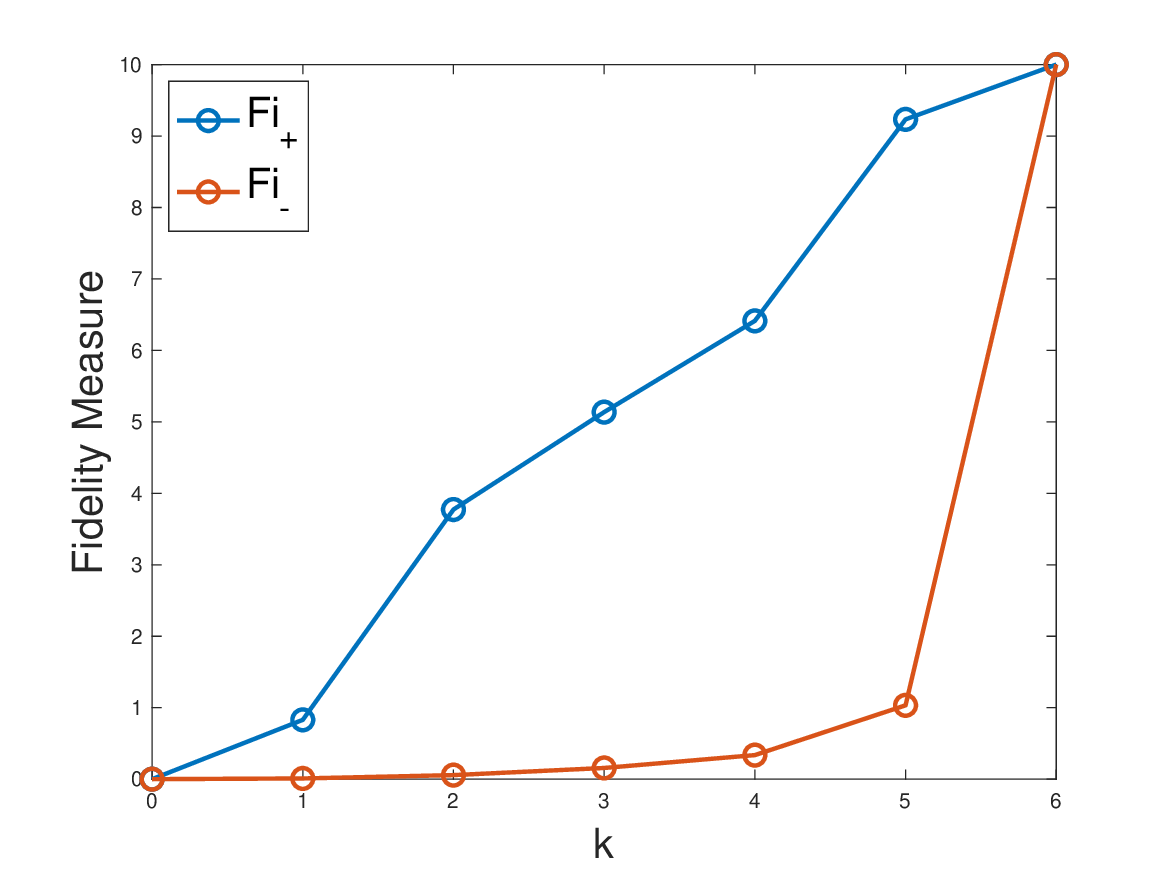}
    \caption{SVB}
\end{minipage}
\hfill
\begin{minipage}{0.32\textwidth}
    \centering
    \includegraphics[width=\textwidth]{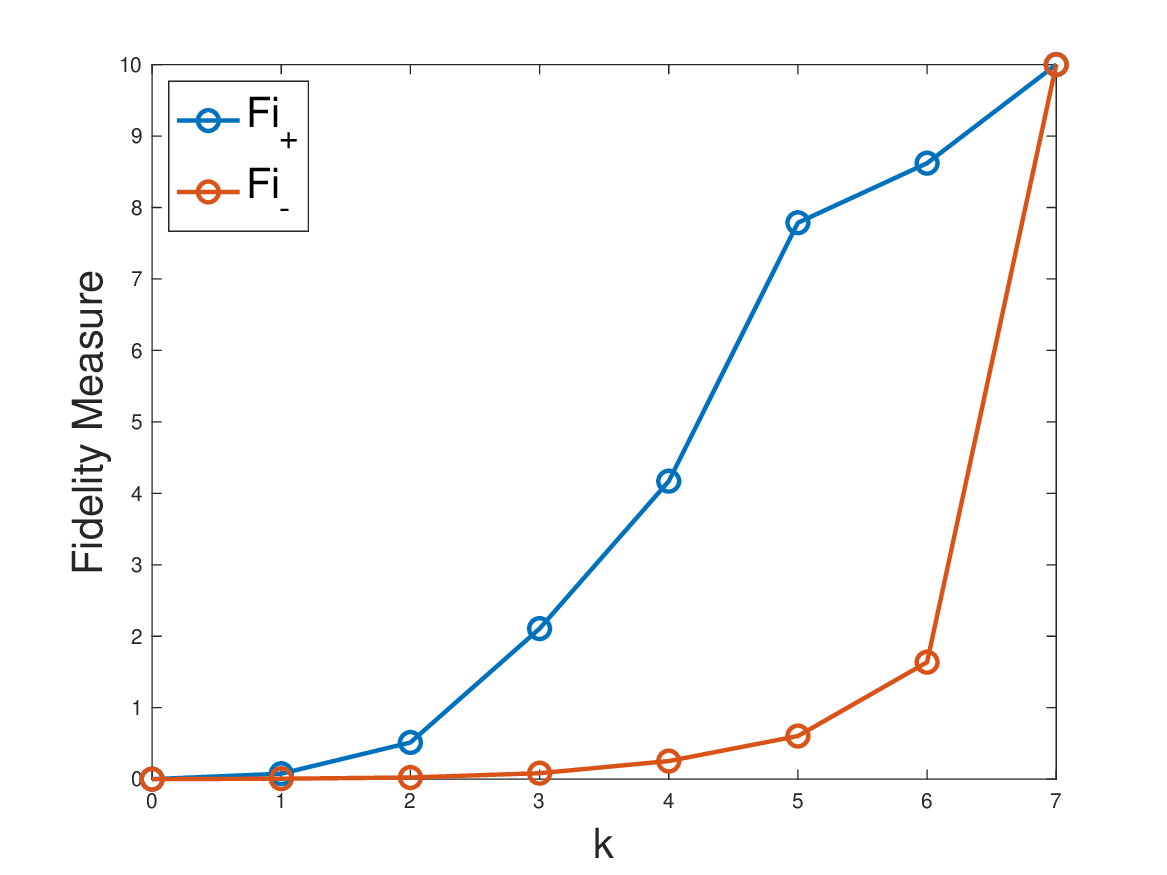}
    \caption{WFC}
\end{minipage}
\hfill
\begin{minipage}{0.32\textwidth}
    \centering
    \includegraphics[width=\textwidth]{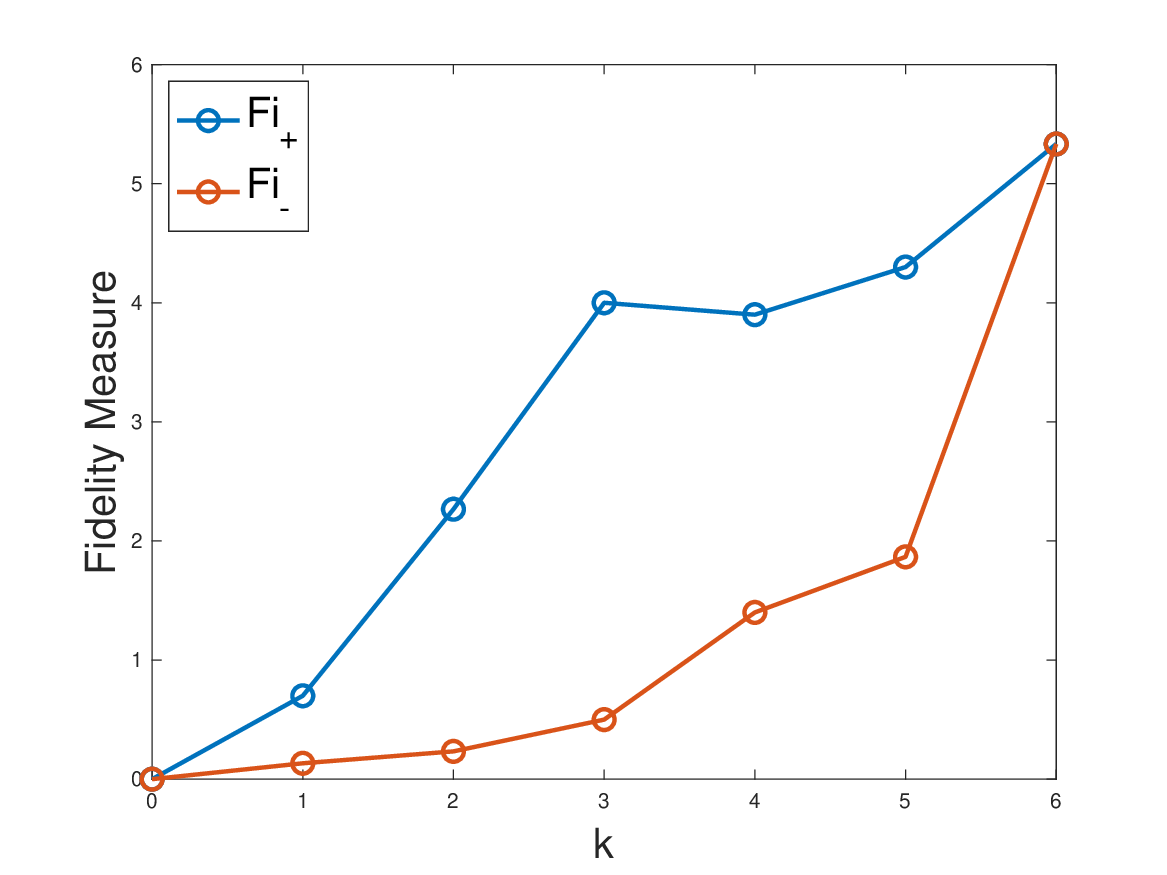}
    \caption{Crocs}
\end{minipage}
\caption{Fidelity check of BShap-based risk attributions across SVB, WFC, and Crocs.
 The x-axis record the number of removed features, whereas the y-axis calculate the fidelity score. }
\label{fig:fidelity}
\end{figure}

\subsection{{Using Unified Risk Headings}}

Since there is no universal standard for risk headings, firms typically adopt their own classifications to better reflect their specific risk structures. In certain settings, however, it is desirable to impose a common set of risk headings in order to facilitate comparison across firms, analogous to the use of common factors in factor models (e.g., industry factors). In this subsection, we describe how to implement such a unified risk classification.

First, it is worth noting that different firms may adopt substantially different risk headings due to the nature of their businesses. For example, the risk disclosures of Crocs, Inc.\ include a heading such as \textit{Risks Related to Our Products}, which would not typically appear in the context of banks. In such cases, there are no corresponding risk contents for banks, and we would expect the associated attribution to be zero, as the content under this heading is effectively empty.
Fortunately, this is guaranteed by the Dummy axiom satisfied by the Shapley value, as discussed in Theorem~\ref{thm:SH_uniq}. As a result, it is safe to directly assign zero attribution to such headings and omit the corresponding calculations in this scenario.

Next, using the examples of Wells Fargo \& Company and Silicon Valley Bank in the previous subsections, we illustrate how the Shapley value provides results under a unified set of risk headings. We consider the following commonly used risk headings for banks:
\begin{enumerate}
    \item Credit Risks
    \item Market, Interest Rate, and Liquidity Risks
    \item Operational and Technology Risks
    \item Regulatory and Legal Risks
    \item Strategic and External Risks
    \item Other Risks
\end{enumerate}
Note that the choice of unified risk headings is not unique. Alternative classifications may be adopted depending on the specific context or risk preferences; the above categorization is used here solely for illustrative purposes.

If a company's original risk heading matches or is contained within one of the unified risk headings, we directly assign the corresponding content to that unified category. Otherwise, we use ChatGPT to classify the risk factors into the appropriate unified risk headings. The prompt used for this classification is provided in Appendix~\ref{sec:prompt3}. The summarized risk factors under the unified risk headings for SVB and WFC are provided in Appendix~\ref{sec:SVB_risk_summary2} and Appendix~\ref{sec:WF_risk_summary2}, respectively.

We report the results using the unified risk headings for SVB and WFC in Table~\ref{tab:unified_risk}. Overall, compared with the original results in Table~\ref{tab:risk_SVB} and Table~\ref{tab:risk_WFC}, there are no substantial changes in the major risk headings, as the unified headings largely overlap with the original classifications.
We observe a noticeable increase in Credit Risk for WFC, driven by the reallocation of part of the original \textit{Mortgage Business Risk} into \textit{Credit Risk}. Comparing SVB and WFC directly, the most salient difference is that WFC exhibits significantly higher exposure to \textit{Regulatory and Legal Risk}, which is consistent with its 2016 fake-accounts scandal and subsequent enforcement actions.
In contrast, SVB shows higher levels of \textit{Credit Risk and Market, Interest Rate, and Liquidity Risk}, which is also reasonable given that, as a smaller, startup-oriented financial institution, it is more vulnerable to adverse macroeconomic conditions following the COVID-19 period.

\begin{table}[htbp]
\centering
\caption{Risk attributions of Silicon Valley Bank in 2022 and Wells Fargo \& Company in 2024 computed using BShap under the unified risk headings. We report the mean attributions along with their $95\%$ confidence intervals.}
\label{tab:unified_risk}
\begin{tabular}{lcc|cc}
\toprule
 & \multicolumn{2}{c|}{\textbf{SVB}} & \multicolumn{2}{c}{\textbf{WFC}} \\
\textbf{Risk Heading} & \textbf{Mean} & \textbf{95\% CI} & \textbf{Mean} & \textbf{95\% CI} \\
\midrule
Credit Risks & 4.10 & (3.97, 4.22) & 3.18 & (3.11, 3.25) \\
Market, Interest Rate, and Liquidity Risks & 2.96 & (2.88, 3.04) & 2.40 & (2.34, 2.45) \\
Operational and Technology Risks & 0.99 & (0.90, 1.09) & 1.56 & (1.50, 1.62)\\
Regulatory and Legal Risks & 1.10 & (1.02, 1.18) & 2.04 & (1.99, 2.09) \\
Strategic and External Risks & 0.64 & (0.59, 0.70) & 0.69 & (0.64, 0.74) \\
Other Risks & 0.21 & (0.19, 0.24) & 0.12 & (0.10, 0.13) \\
\bottomrule
\end{tabular}
\end{table}

\section{Conclusion}

This work investigates the performance of \textit{baseline Shapley values} in explaining predictions made by large language models  on textual data. To rigorously evaluate the effectiveness of attribution methods, we propose a set of axioms that emphasize \textit{fairness} and \textit{conceptual soundness}. These axioms are accompanied by formal mathematical proofs.
We show that baseline Shapley values satisfy key \textit{domain knowledge-inspired axioms}, enabling reasonable interpretation of the \textit{first and second-order main effects of a individual feature}, the \textit{relative importance of features within a single model}, and the \textit{comparative importance of the same feature across different models}. This axiomatic preservation makes the baseline Shapley value a reliable tool for interpreting feature attributions in text-based large language models.
In addition, we illustrate through concrete examples how baseline Shapley values can provide meaningful insights into the behavior of large language models.

{The use of the Shapley value can support risk management in finance in two respects. First, it facilitates the management of model risk. LLMs are inherently opaque, while financial applications require a clear understanding of model behavior. Our analysis further shows that the baseline Shapley value is consistent with financial intuition by preserving finance-inspired axioms, thereby enabling a principled interpretation of model outputs. 
Second, in many applications, the outputs of LLMs are directly related to risk. For example, in our risk attribution setting, the attributions reflect the risk associated with each risk heading relative to the total risk faced by the firm.
 More broadly, in applications such as credit scoring and fraud detection, LLMs can be used to quantify textual information describing potential customer risk. In these settings, the LLM output itself represents a risk measure, and the baseline Shapley value provides a decomposition that identifies the sources and relative contributions of these risks, thereby further supporting effective risk management.

 }

One notable limitation of the Shapley value is its \textit{computational complexity}. As the number of features increases, exact computation becomes intractable. A promising solution is the use of \textit{hierarchical explanations}, which aggregate features into structured groups.
Furthermore, while our current analysis centers on monotonicity-based principles, future work could explore \textit{richer structural patterns} in financial text data and develop new attribution frameworks tailored to the complexities of financial language and context.



\section*{Disclosure of Interest} The authors report that there are no competing interests to declare.

\section*{Data Availability Statement.} All data used in this paper are publicly available, and the relevant references are provided in the main text.

%

\bibliographystyle{tfcad}
\bibliography{sample-base}


\appendix

\section{Proofs}

\subsection{Proofs} \label{sec:proof}

\begin{proof}[Proof of Theorem~\ref{thm:AIM}]
    Given any subset $S$, we partition $\*x = (x_{\alpha},\*x_S,\*x_{N \backslash (S \cup \alpha)})$. 
    By assumption, we have
    \begin{align*}
    \text{BS}_{\alpha} = \sum_{S \subseteq N \backslash \alpha} \frac{|S|! (|N|-|S|-1)!}{|N|!} (f(\overline{x}_{\alpha},\overline{\*x}_{S},\*x'_{N \backslash (S \cup \alpha)}) - f(x_{\alpha}',\overline{\*x}_{S},\*x'_{N \backslash (S \cup \alpha)})) \geq 0.
\end{align*}
    
\end{proof}

\begin{proof}[Proof of Theorem~\ref{thm:DIM}]
Given any subset $S$, we partition $\*x = (x_{\alpha},\*x_S,\*x_{N \backslash (S \cup \alpha)})$. By assumption, we have
\begin{align*}
    & \text{BS}_{\alpha}((\overline{x}_{\alpha},\overline{\*x}_{\neg})) - \text{BS}_{\alpha}((x^*_{\alpha},\overline{\*x}_{\neg})) \\
    &= \sum_{S \subseteq N \backslash \alpha} \frac{|S|! (|N|-|S|-1)!}{|N|!} (f(\overline{x}_{\alpha},\overline{\*x}_{S},\*x'_{N \backslash (S \cup \alpha)}) - f(x^*_{\alpha},\overline{\*x}_{S},\*x'_{N \backslash (S \cup \alpha)})) \geq 0.
\end{align*}
\end{proof}

\begin{proof}[Proof of Theorem~\ref{thm:DME} and \ref{thm:IME}]
    We prove Theorem~\ref{thm:DME}, and the result for Theorem~\ref{thm:IME} is similar. 
    Given any subset $S$, we partition $\*x = (x_{\alpha},\*x_S,\*x_{N \backslash (S \cup \alpha)})$. First, $0 \leq \text{BS}_{\alpha}((\overline{x}_{\alpha},\overline{\*x}_{\neg})) - \text{BS}_{\alpha}((x^*_{\alpha},\overline{\*x}_{\neg}))$ is true by Theorem~\ref{thm:DIM}. For the second inequality, by discrete convexity, we have
    \begin{align*}
        & \text{BS}_{\alpha}((\overline{x}_{\alpha},\overline{\*x}_{\neg})) - \text{BS}_{\alpha}((x^*_{\alpha},\overline{\*x}_{\neg})) \\
         &= \sum_{S \subseteq N \backslash \alpha} \frac{|S|! (|N|-|S|-1)!}{|N|!} (f(\overline{x}_{\alpha},\overline{\*x}_{S},\*x'_{N \backslash (S \cup \alpha)}) - f(x^*_{\alpha},\overline{\*x}_{S},\*x'_{N \backslash (S \cup \alpha)})) \\
         & \leq \sum_{S \subseteq N \backslash \alpha} \frac{|S|! (|N|-|S|-1)!}{|N|!} (f(x_{\alpha}^*,\overline{\*x}_{S},\*x'_{N \backslash (S \cup \alpha)}) - f(x'_{\alpha},\overline{\*x}_{S},\*x'_{N \backslash (S \cup \alpha)})) \\
         &= \text{BS}_{\alpha}((x^*_{\alpha},\overline{\*x}_{\neg})).
    \end{align*}
\end{proof}

\begin{proof}[Proof of Theorem~\ref{thm:PM}]
    Recall that 
    \begin{align*}
        \text{BS}_{\alpha} = \sum_{S \subseteq N \backslash \alpha} \frac{|S|! (|N|-|S|-1)!}{N!} (f(\overline{x}_{\alpha},\overline{\*x}_{S},\*x'_{N \backslash (S \cup \alpha)}) - f(x_{\alpha}',\overline{\*x}_{S},\*x'_{N \backslash (S \cup \alpha)})).
    \end{align*}
    For $\text{BS}_{\beta}$, we utilize the symmetry. For each $S$ here, we consider $S'$ such that $x_{\alpha}$ and $x_{\beta}$ are swapped within $S$, and everything else is left unchanged. That is, if $\beta \notin S$, $S'=S$; if $\beta \in S$, then $S' = (S \backslash \beta) \cup \alpha$. For this arrangement, we have
    \begin{align*}
        & \text{BS}_{\alpha} - \text{BS}_{\beta} = \sum_{S \subseteq N \backslash \alpha} \frac{|S|! (|N|-|S|-1)!}{N!} \Delta f(S),
    \end{align*}
    where 
    \begin{align*}
        \Delta f(S) &= (f(\overline{x}_{\alpha},\overline{\*x}_{S},\*x'_{N \backslash (S \cup \alpha)}) - f(x_{\alpha}',\overline{\*x}_{S},\*x'_{N \backslash (S \cup \alpha)})) \\
        &- (f(\overline{x}_{\beta},\overline{\*x}_{S'},\*x'_{N \backslash (S' \cup \beta)}) - f(x_{\beta}',\overline{\*x}_{S'},\*x'_{N \backslash (S' \cup \beta})).
    \end{align*}
    Then, it is sufficient to show $\Delta f(S) \geq 0$. Given any subset $S$, we partition $\*x = (x_{\alpha},x_{\beta},\*x_{S\backslash \beta},\*x_{N \backslash (S \cup \alpha \cup \beta)})$.   
    If $\beta \notin S$, $S=S'$,  and we have    
    \begin{align*}
        \Delta f(S) &= (f(\overline{x}_{\alpha},x_{\beta}',\overline{\*x}_{S},\*x'_{N \backslash (S \cup \alpha \cup \beta)}) - f(x_{\alpha}',x_{\beta}',\overline{\*x}_{S},\*x'_{N \backslash (S \cup \alpha \cup \beta)})) \\
        &- (f(x_{\alpha}',\overline{x}_{\beta},\overline{\*x}_{S'},\*x'_{N \backslash (S' \cup \alpha \cup \beta)}) - f(x_{\alpha}',x_{\beta}',\overline{\*x}_{S'},\*x'_{N \backslash (S' \cup \alpha \cup \beta})) \\
        &= f(\overline{x}_{\alpha},x_{\beta}',\overline{\*x}_{S},\*x'_{N \backslash (S \cup \alpha \cup \beta)}) - f(x_{\alpha}',\overline{x}_{\beta},\overline{\*x}_{S'},\*x'_{N \backslash (S' \cup \alpha \cup \beta)}) \\
        &= f(\overline{x}_{\alpha},x_{\beta}',\overline{\*x}_{S},\*x'_{N \backslash (S \cup \alpha \cup \beta)}) - f(x_{\alpha}',\overline{x}_{\beta},\overline{\*x}_{S},\*x'_{N \backslash (S \cup \alpha \cup \beta)}).
    \end{align*}
    Therefore, $\Delta f(S) \geq 0$ by the assumption of pairwise monotonicity. 

    If $\beta \in S$, since $S \backslash \beta = S' \backslash \alpha$, $S \cup \alpha = S' \cup \beta$, then we have
    \begin{align*}
        \Delta f(S) &= (f(\overline{x}_{\alpha},\overline{x}_{\beta},\overline{\*x}_{S \backslash \beta},\*x'_{N \backslash (S \cup \alpha)}) - f(x_{\alpha}',\overline{x}_{\beta},\overline{\*x}_{S \backslash \beta},\*x'_{N \backslash (S \cup \alpha)})) \\
        &- (f(\overline{x}_{\alpha},\overline{x}_{\beta},\overline{\*x}_{S' \backslash \alpha},\*x'_{N \backslash (S'  \cup \beta)}) - f(\overline{x}_{\alpha},x_{\beta}',\overline{\*x}_{S' \backslash \alpha},\*x'_{N \backslash (S' \cup \beta})) \\
        &= f(\overline{x}_{\alpha},x_{\beta}',\overline{\*x}_{S' \backslash \alpha},\*x'_{N \backslash (S' \cup \beta)}) - f(x_{\alpha}',\overline{x}_{\beta},\overline{\*x}_{S \backslash \beta},\*x'_{N \backslash (S \cup \alpha)}) \\
        &=f(\overline{x}_{\alpha},x_{\beta}',\overline{\*x}_{S \backslash \beta},\*x'_{N \backslash (S \cup \alpha)}) - f(x_{\alpha}',\overline{x}_{\beta},\overline{\*x}_{S \backslash \beta},\*x'_{N \backslash (S \cup \alpha)}).
    \end{align*}
    Therefore, $\Delta f(S) \geq 0$ by the pairwise monotonicity assumption. Since $\Delta f(S)$ is always nonnegative, we conclude. 

\end{proof}

\begin{proof}[Proof of Theorem~\ref{thm:FMD}]
    Given any subset $S$, we partition $\*x = (x_{\alpha},\*x_S,\*x_{N \backslash (S \cup \alpha)})$. By linearity, we have
    \begin{align*}
    \text{BS}_{\alpha}(\overline{\*x},\*x',f) - \text{BS}_{\alpha}(\overline{\*x},\*x',g) &= \text{BS}_{\alpha}(\overline{\*x},\*x',f-g) \\
    &= \sum_{S \subseteq N \backslash \alpha} \frac{|S|! (|N|-|S|-1)!}{|N|!} \Delta f(S), 
\end{align*}
    whereas
    \begin{align*}
        \Delta f(S) &= ( f(\overline{x}_{\alpha},\overline{\*x}_S,\*x'_{N \backslash (S \cup \alpha)}) - f(x_{\alpha}',\overline{\*x}_S,\*x'_{N \backslash (S \cup \alpha)})) \\
        &- (g(\overline{x}_{\alpha},\overline{\*x}_S,\*x'_{N \backslash (S \cup \alpha)}) - g(x_{\alpha}',\overline{\*x}_S,\*x'_{N \backslash (S \cup \alpha)}))
    \end{align*}
    Therefore, $\Delta f(S) \geq 0$ by the assumption in the symmetric monotonicity axiom.  We conclude. 
\end{proof}

\section{{More Examples for Simple LLM Tasks}}

\subsection{A Comparison of GPT for Sector Classifications} \label{sec:sec_GPT}

For comparison, we use ChatGPT 5.2 under the same setup as described in Section~\ref{sec:GPT}. The following prompt is employed to compute the characteristic function:

\tcbset{colback=gray!10, colframe=black, left=4mm, right=4mm, top=2mm, bottom=2mm}
\begin{tcolorbox}

    You are an expert financial analyst. Your task is to classify companies into their appropriate industry sectors based on their descriptions. \\

    Instructions:
    \begin{itemize}
        \item Assign the company to one of the predefined sectors listed below.
        \item Return a list of probabilities for each sector, nothing else.
        \item Probabilities should sum to 1.
    \end{itemize}

    Predefined sectors:
    \begin{enumerate}
        \item Technology
        \item Finance
        \item Healthcare
        \item Consumer Goods
        \item Energy
        \item Others
    \end{enumerate}
    You have read the company description below. 

    Company description:
    \{$\overline{\*x}$\}

    Now, for each of the following $2^n$ = len(active mask) possible combinations
    of the description, compute the list of probabilities. 

    COALITIONS (each numbered):
    {coalition description}

    Return ONLY a list of probability list in order, one list per coalition. The order must match the numbering above.
    
\end{tcolorbox}

Since ChatGPT's outputs involve randomness, we run the experiment independently 30 times and report the averaged attributions. On average, GPT assigns $30\%$ probability to the Technology sector and $65\%$ probability to the Finance sector, compared to $64\%$ and $14\%$ assigned by the BLM model. Clearly, GPT provides more reasonable results that are better aligned with the true sector distribution.

Applying BShap to ChatGPT results yields the following attributions:
\begin{align*}
    \textbf{BS}_{\text{tech}} &= \begin{bmatrix} -0.12 & 0.29 & -0.02 \end{bmatrix}^T, \\
    \textbf{BS}_{\text{finance}} &= \begin{bmatrix} 0.36 & -0.01 & 0.10 \end{bmatrix}^T.
\end{align*}
We observe that the main difference between ChatGPT and the BLM model is that ChatGPT treats the phrase ``quantitative investment management" as a strong indicator of a financial company. Additionally, ChatGPT does not consider ``artificial intelligence and machine learning" as a negative indicator for a financial company, which is reasonable given that many financial firms heavily utilize AI in practice. These results suggest that ChatGPT outperforms the BLM model in this example, likely because it is more sensitive to financial terminology, a reasonable outcome given that ChatGPT represents the current state-of-the-art.

\section{{More Details in Risk Attribution Experiments}}

\subsection{Prompt} \label{sec:prompt}

\subsubsection{Prompt for the Characteristic Function with Empty Baselines} \label{sec:prompt1}
We use the following prompt to ask the ChatGPT API to calculate the risk score with the empty baseline.

\tcbset{colback=gray!10, colframe=black, left=4mm, right=4mm, top=2mm, bottom=2mm}
\begin{tcolorbox}
    You are a deterministic financial risk scoring function.

    You MUST follow these rules exactly:
    \begin{enumerate}
    \item Highly generic or boilerplate risk statements receive a score of 0.
    \item Vague language receives very low weight.
    \item Firm-specific, asymmetric, balance-sheet-relevant risks receive high scores.
    \item The empty coalition MUST receive 0.
    \item You MUST use the full 0–10 scale.
    \end{enumerate}

    You have read the FULL risk disclosure below:

    FULL DISCLOSURE:
    \{$\overline{x}$\}

    Now, for each of the following $2^n=$len(coalition risk headings) possible combinations
    of risk sections (coalitions), compute the risk score (0–10). The empty set
    should receive 0.

    COALITIONS (each numbered):
    \{coalition risk headings\}

    Return ONLY a list of numbers in order, one number per coalition.
    Example:
    [0.0, 1.2, 2.5, ..., 9.8]
    The order must match the numbering above.
\end{tcolorbox}

Within the prompt, the coalition of risk headings consists of combinations of individual risk headings used to calculate the characteristic functions for BShap. For example, for the first risk heading, we have:
\begin{quote}
    coalition risk headings = 1. Credit Risks
\end{quote}

Note that when the number of features becomes large, using the ChatGPT API to compute the characteristic function in a single prompt can produce outputs with incorrect dimensions, as the task may exceed the model's processing capacity. In such cases, we simply discard the affected calculations. This issue could be addressed by fine-tuning a local LLM or restoring certain model parameters; however, since model development is not the focus of this work, we do not pursue these solutions and instead concentrate on the model explanations.

\subsubsection{Prompt for the Characteristic Function with Baseline at Different Times} \label{sec:prompt2}

We discuss the prompt when employing the baseline for content from a previous time.
 For our previous approach that dealing with the empty baseline, we directly asked GPT to compute the risk score for the combination of risk headings after reading the content. However, due to limitations of the ChatGPT API, if we request the same computation for coalitions of differences, ChatGPT exhibits a trend effect, assigning the same sign to all coalitions once it determines which document is riskier. To address this, we calculate the relative risk independently by asking ChatGPT to recompute the relative risk score for each coalition, which increases computational time. Notably, this could be improved with a model that has local memory. Since the model itself is not the primary focus of this paper, we simply recompute each time and concentrate on BShap results.

Additionally, one could ask ChatGPT to evaluate the risk scores at different time separately and then apply the Shapley value to their difference. However, the current ChatGPT version is not trained specifically for this task and tends to assign large scores in all cases, causing the relative change to be easily overlooked. Therefore, it is preferable to directly request ChatGPT to provide the relative risk score between the two documents to emphasize the differences. As mentioned above, these limitations could be mitigated by improving the model, but doing so is beyond the scope of our study.

We provide the following prompt to ask the GPT API to calculate the risk score between $\*x'$ and $\overline{\*x}$. In our code, this is repeatedly applied to each coalitions to obtain the characteristic function. 
\tcbset{colback=gray!10, colframe=black, left=4mm, right=4mm, top=2mm, bottom=2mm}
\begin{tcolorbox}
    You are a deterministic financial risk scoring compare function.

    You have read the FULL risk disclosure for two documents below:

    DISCLOSURE A: \{"A. $\backslash$n $\backslash$n" + $\*x'$\}

    DISCLOSURE B: \{"B. $\backslash$n $\backslash$n" + $\overline{\*x}$\}

    Calculate the relative risk of DISCLOSURE B compared to DISCLOSURE A (B $-$ A).
    You MUST follow these rules exactly:
    \begin{enumerate}
        \item Return one numeric value on a scale of [-10, 10].
        \item Return 0 if the overall risk level is very close.
        \item Return a negative value if DISCLOSURE B is less risky than DISCLOSURE A.
        \item Return a positive value if DISCLOSURE B is more risky than DISCLOSURE A.
        \item Use magnitude to reflect strength:
        \begin{itemize}
        \item Strongly negative: $\leq$ -7 (DISCLOSURE B is much less risky)
        \item Moderately negative: -6 to -3
        \item Neutral: -2 to 2
        \item Moderately positive: 3 to 6
        \item Strongly positive: $\geq$ 7 (DISCLOSURE B is much more risky)
        \end{itemize}
    \end{enumerate}

    Return ONLY a number in [-10,10] scale.
\end{tcolorbox}

\subsection{{Prompt for Risk Factor Identification with Unified Headings}} \label{sec:prompt3}

We use the following prompt to determine the appropriate unified risk heading for each piece of content. As input, we consider only the risk factor titles (i.e., the subheadings of the detailed descriptions), which we find yield more accurate classifications with less noise when processed by ChatGPT. While this approach could be further improved through more refined training or model tuning, it is sufficient for our purposes and not the primary focus of this study.

\tcbset{colback=gray!10, colframe=black, left=4mm, right=4mm, top=2mm, bottom=2mm}
\begin{tcolorbox}
    You are given a list of paragraphs from a company filing: \{full text\}.

    Your task is to classify each paragraph into one of the following risk headings:\{headings\}.

    Return the result as a JSON list of lists. Each sublist corresponds to one risk heading in the same order as headings. 
    Each sublist contains the indices of paragraphs that belong to that risk heading.
    The number of elements in the output list must equal the number of input paragraphs.
\end{tcolorbox}

\subsection{Summary of Risk Factors} 
In this section, we provide a summary of the risk factors. Broadly speaking, the summary includes the risk factors listed under each risk heading, while detailed discussions of individual risk factors are omitted. Interested readers are referred to the original Form~10-K for further details.

\subsubsection{Summary of SVB's Risk Factors}\label{sec:SVB_risk_summary}

\begin{enumerate}
\item Credit Risks
\begin{enumerate}
\item Because of the credit profile of our loan portfolio, our levels of nonperforming assets and charge-offs can be volatile, and we may need to make material provisions for credit losses in any period.
\item Our ACL is determined based upon both objective and subjective factors, and may not be adequate to absorb any actual credit losses.
\item The borrowing needs of our clients have been and may continue to be unpredictable, especially during a challenging economic environment. We may not be able to meet our unfunded credit commitments, or adequately reserve for losses, which could have a material adverse effect.
\end{enumerate}
\item Market and Liquidity Risks
\begin{enumerate}
\item Instability and adverse developments in national or global financial markets and overall economic conditions, including as a result of geopolitical matters, may materially affect our business, financial condition and results of operations.
\item Our interest rate spread may decline further in the future. Any material reduction in our interest rate spread could have a material adverse effect on our business, results of operations or financial condition.
\item Liquidity risk could impair our ability to fund operations and jeopardize our financial condition.
\item Our equity warrant assets, venture capital and private equity fund investments and direct equity investment portfolio gains and losses depend upon the performance of our portfolio investments and the general condition of the public and private equity and M$\&$A markets which are uncertain and may vary materially by period.
\item Changes in the market for public equity offerings, M$\&$A or a slowdown in private equity or venture capital investment levels have affected and may continue to affect the needs of our clients for investment banking or M$\&$A advisory services and lending products, which could adversely affect our business, results of operations or financial condition.
\end{enumerate}
\item Operational Risks
\begin{enumerate}
\item Fraudulent activity could have a material adverse effect on our business, financial condition or results of operations.
\item A data breach, disruption of service or other cybersecurity-related incident could have a material adverse effect on our business, financial condition or results of operations.
\item We face risks associated with the ability of our IT systems and our people and processes to support our operations and future growth effectively.
\item Business disruptions due to natural disasters and other external events beyond our control, including pandemics, have in the past adversely affected our business, financial condition or results of operations and may do so in the future.
\item We face risks from a prolonged work-from-home arrangement as well as our implementation of a broader plan to return to the office.
\item We face risks from our reliance on business partners, service providers and other third parties.
\item The soundness of other financial institutions could adversely affect us.
\item We depend on the accuracy and completeness of information about customers and counterparties.
\item We face risks associated with our current international operations and ongoing international expansion.
\item Our holding company, SVB Financial, relies on equity warrant assets income, investment distributions and dividends from its subsidiaries for most of its cash revenues.
\item Climate change has the potential to disrupt our business and adversely impact the operations and creditworthiness of our clients.
\item The COVID-19 pandemic created significant economic and financial disruptions that adversely affected certain aspects of our business and operations and such disruptions have the potential to reoccur.
\end{enumerate}
\item Legal, Compliance and Regulatory Risks
\begin{enumerate}
\item We are subject to extensive regulation that could limit or restrict our activities, impose financial requirements or limitations on the conduct of our business, or result in higher costs to us, and the stringency of the regulatory framework applicable to us may increase if, and as, our balance sheet continues to grow.
\item As a bank holding company with more than $\$$100 billion of average total consolidated assets, we are subject to stringent regulations, including certain enhanced prudential standards applicable to large bank holding companies. If we exceed certain other thresholds, we will become subject to even more stringent regulations.
\item We face a risk of noncompliance and enforcement action with the Bank Secrecy Act, other anti-money laundering and anti-bribery statutes and regulations and U.S. economic and trade sanctions.
\item If we were to violate, or fail to comply with, international, federal or state laws or regulations governing financial institutions, we could be subject to disciplinary action or litigation that could have a material adverse effect on our business, financial condition, results of operations or reputation.
\item Laws and regulations regarding the handling of personal data and information may impede our services or result in increased costs, legal claims or fines against us.
\item Adverse results from litigation or governmental or regulatory investigations can impact our business practices and operating results.
\item A failure to identify and address potential conflicts of interest could adversely affect our businesses.
\item Anti-takeover provisions and federal laws may prevent a merger or acquisition that may be attractive to stockholders and/or have an adverse effect on our stock price.
\end{enumerate}
\item Strategic, Reputational and Other Risks
\begin{enumerate}
\item We have experienced significant growth during 2021 and into 2022, including deposit growth. If we again experience deposit growth at a similar or greater rate than has occurred in the past, we may need to raise additional equity to support our capital ratios.
\item Concentration of risk increases the potential for significant losses, while the establishment of limits to mitigate concentration risk increases the potential for lower revenues and slower growth.
\item Decreases in the amount of equity capital available to our clients could adversely affect us.
\item We face competitive pressures that could adversely affect our business, financial results, or growth.
\item Our ability to maintain or increase our market share depends on our ability to attract and maintain, as well as meet the needs of, existing and future clients.
\item We face risks in connection with our strategic undertakings and new business initiatives.
\item We may fail to realize growth prospects and benefits anticipated as a result of the Boston Private acquisition.
\item Any damage to our reputation and relationships could have a material adverse effect on our business.
\item An ineffective risk management framework could have a material adverse effect on our strategic planning and our ability to mitigate risks and/or losses and could have adverse regulatory consequences.
\item We do not currently pay dividends on shares of our common stock and may not do so in the future.
\end{enumerate}
\item General Risk Factors:
\begin{enumerate}
    \item If we fail to retain key employees or recruit new employees, or if we are unable to effectively manage the growth of our employee base, our growth and results of operations could be adversely affected.
    \item If we fail to maintain an effective system of internal control over financial reporting, we may not be able to accurately report our financial results. As a result, current and potential holders of our securities could lose confidence in our financial reporting, which would harm our business and the trading price of our securities.
    \item Changes in accounting standards could materially impact our financial statements.
    \item We could be adversely affected by changes in tax laws and regulations or their interpretations.
    \item We rely on quantitative and qualitative models to measure risks and to estimate certain financial values.
    \item The price of our capital stock may be volatile or may decline.
    \item Our capital stock is subordinate to our existing and future indebtedness.
\end{enumerate}
\end{enumerate}

\subsubsection{Summary of WFC's Risk Factors} \label{sec:WF_risk_summary}

\begin{enumerate}
    \item Economical, Financial Markets, Interest Rates, and Liquidity Risks
    \begin{enumerate}
        \item Our financial results have been, and will continue to be, materially affected by general economic conditions, and a deterioration in economic conditions or in the financial markets may materially adversely affect our lending and other
        businesses and our financial results and condition.
        \item Changes in interest rates and financial market values could
        reduce our net interest income and earnings, as well as our
        other comprehensive income, including as a result of
        recognizing losses on the debt and equity securities that we
        hold in our portfolio or trade for our customers.
        \item Effective liquidity management is essential for the operation
        of our business, and our financial results and condition could be
        materially adversely affected if we do not effectively manage
        our liquidity.
        \item Adverse changes in our credit ratings could have a material
        adverse effect on our liquidity, cash flows, and financial results
        and condition.
        \item We rely on dividends from our subsidiaries for liquidity, and
        federal and state law, regulatory requirements, and certain
        contractual arrangements can limit those dividends.
    \end{enumerate}
    \item Regulatory Risks 
    \begin{enumerate}
        \item Current and future legislation and/or regulation could require
        us to change certain of our business practices, reduce our
        revenue and earnings, impose additional costs on us or
        otherwise adversely affect our business operations and/or
        competitive position.
        \item We could be subject to more stringent capital, leverage or
        liquidity requirements or restrictions on our growth, activities
        or operations if regulators determine that our resolution or
        recovery plan is deficient.
        \item Our security holders may suffer losses in a resolution of
        Wells Fargo even if creditors of our subsidiaries are paid in full.
        \item Regulatory rules and requirements may impose higher capital
        and liquidity levels, limiting our ability to pay common stock
        dividends, repurchase our common stock, invest in our
        business, or provide loans or other products and services to our
        customers.
        \item FRB policies, including policies on interest rates, can
        significantly affect business and economic conditions and our
        financial results and condition.
    \end{enumerate}
    \item Credit Risks
    \begin{enumerate}
        \item Increased credit risk, including as a result of a deterioration in
        economic conditions or changes in market conditions, could
        require us to increase our provision for credit losses and
        allowance for credit losses and could have a material adverse
        effect on our results of operations and financial condition.
        \item We may have more credit risk and higher credit losses to the
        extent our loans are concentrated by loan type, industry
        segment, borrower type, or location of the borrower or
        collateral.
    \end{enumerate}
    \item Operational, Strategic, and Legal Risks
    \begin{enumerate}
        \item A failure in or breach of our operational or security systems,
        controls or infrastructure, or those of our third-party vendors
        and other service providers, could disrupt our businesses,
        damage our reputation, increase our costs and cause losses.
        \item A cyberattack or other information security incident could
        have a material adverse effect on our results of operations,
        financial condition, or reputation.
        \item Our framework for managing risks may not be fully effective in
        mitigating risk and loss to us.
        \item We may be exposed to additional legal or regulatory
        proceedings, costs, and other adverse consequences related to
        instances where customers may have experienced financial
        harm.
        \item We may incur fines, penalties, business restrictions, and other
        adverse consequences from regulatory violations or from any
        failure to meet regulatory standards or expectations.
        \item Reputational harm, including as a result of our actual or alleged
        conduct or public opinion of the financial services industry
        generally, could adversely affect our business, results of
        operations, and financial condition.
        \item If we are unable to develop and execute effective business
        plans or strategies or manage change effectively, our
        competitive standing and results of operations could suffer.
        \item Our operations and business could be adversely affected by the
        impacts of climate change.
        \item We are exposed to potential financial loss or other adverse
        consequences from legal actions.
    \end{enumerate} 
    \item Mortgage Business Risks 
    \begin{enumerate}
        \item Our mortgage banking revenue can be volatile from quarter to
        quarter, including from the impact of changes in interest rates,
        and we rely on the GSEs to purchase our conforming loans to
        reduce our credit risk and provide liquidity to fund new
        mortgage loans.
        \item We may suffer losses, penalties, or other adverse
        consequences if we fail to satisfy our obligations with respect
        to the residential mortgage loans or other assets we originate
        or service.
    \end{enumerate}
    \item Competitive Risks
    \begin{enumerate}
        \item We face significant and increasing competition in the rapidly
        evolving financial services industry.
        \item Our ability to attract and retain qualified employees is critical
        to the success of our business and failure to do so could adversely affect our business performance, competitive
        position and future prospects.
    \end{enumerate}
    \item Financial Reporting Risks
    \begin{enumerate}
        \item Changes in accounting standards, and changes in how
        accounting standards are interpreted or applied, could
        materially affect our financial results and condition.
        \item Our financial statements require certain assumptions,
        judgments, and estimates and rely on the effectiveness of our
        internal control over financial reporting.
    \end{enumerate} 
\end{enumerate}

\subsubsection{Summary of Crocs's Risk Factors} \label{sec:Croc_risk_summary}

First, we discuss some data processing steps.
\begin{itemize}
    \item Crocs has a different number of risk headings across the two years; specifically, the 2021 Form~10-K includes an additional heading, \emph{HEYDUDE Acquisition Risks}. To facilitate a more direct and interpretable comparison, we merge the content under this heading into \emph{Risks Specific to Our Company and Strategy}. This adjustment is justified for two reasons: first, the acquisition naturally fits within this category; second, if left separate, the new heading would receive a large positive relative risk attribution, since it was previously empty, yielding a trivial BShap result. 
    \item Some risk factors shifted headings between the two years. To ensure consistency, we assign such risk factors to the same heading as in 2020.
    \item There is substantial overlap between the two forms across consecutive years. To focus on the relative differences and reduce computational complexity, we remove risk factors that are identical in both years and retain only those that differ.
\end{itemize}

Below, we provide a summary of the risk factors section. Note that Crocs does not explicitly provide a summary; instead, we generate the summary following the style used for the SVB.

Summary for 2020: 
\begin{enumerate}
    \item Risks Related to Our Products
    \begin{enumerate}
        \item Our success depends substantially on the value of our brand; failure to strengthen and preserve this value, either through our actions or those of our business partners, could have a negative impact on our financial results.
        \item We face significant competition.
        \item Continuing to rationalize our existing product assortment and introducing new products may be difficult and expensive. If we are unable to do so successfully, our brand may be adversely affected and we may not be able to maintain or grow our current revenue and profit levels.
        \item Failure to adequately protect our trademarks and other intellectual property rights and counterfeiting of our brand could divert sales, damage our brand image and adversely affect our business.
        \item Failure to continue to obtain or maintain high-quality endorsers of our products could harm our business.
        \item We rely on technical innovation to compete in the market for our products.
    \end{enumerate}
    \item Risks Related to the Economy
    \begin{enumerate}
        \item The COVID-19 pandemic has had an adverse impact, and may have a future material adverse impact, on our business, operations, liquidity, financial condition, and results of operations.
    \end{enumerate} 
    \item Risks Related to Our Supply Chain
    \begin{enumerate}
        \item If we do not accurately forecast consumer demand, we may have excess inventory to liquidate or have greater difficulty filling our customers’ orders, either of which could adversely affect our business.
        \item Our third-party manufacturing operations must comply with labor, trade, and other laws. Failure to do so may adversely affect us.
        \item We depend solely on third-party manufacturers located outside of the U.S.
        \item We depend on a number of suppliers for key production materials, and any disruption in the supply of such materials could interrupt product manufacturing and increase product costs.
    \end{enumerate} 
    \item Risks Related to International Operations
    \begin{enumerate}
        \item Changes in foreign exchange rates, most significantly but not limited to the Euro, Russian Ruble, Brazilian Real, South Korean Won, Japanese Yen, or other global currencies could have a material adverse effect on our business and financial results.
        \item We conduct significant business activity outside the U.S., which exposes us to risks of international commerce.
        \item Changes in global economic conditions may adversely affect consumer spending and the financial health of our customers and others with whom we do business, which may adversely affect our financial condition, results of operations, and cash resources.
    \end{enumerate}
    \item Risks Specific to Our Company and Strategy
    \begin{enumerate}
        \item We may be unable to successfully execute our long-term growth strategy, maintain or grow our current revenue and profit levels, or accurately forecast demand and supply for our products.
        \item Our business relies significantly on the use of information technology. A significant disruption to our operational technology or those of our business partners, a privacy law violation, or a data security breach could harm our reputation and/or our ability to effectively operate our business, and our financial results.
        \item If our online e-commerce sites, or those of our customers, do not function effectively, our business and financial results could be materially adversely affected.
        \item Our financial success depends in part on the strength of our relationships with, and the success of, our wholesale and distributor customers.
        \item Operating company-operated retail stores incurs substantial fixed costs. If we are unable to generate sales, operate our retail stores profitably, or otherwise fail to meet expectations, we may be unable to reduce such fixed costs and avoid losses or negative cash flows.
        \item Our financial results may be adversely affected if substantial investments in businesses and operations fail to produce expected returns.
        \item We depend on employees across the globe, the loss of whom would harm our business.
        \item We are subject to periodic litigation, which could result in unexpected expenditures of time and resources.
        \item Extreme weather conditions, natural disasters, public health issues, or other events outside of our control could negatively impact our operating results and financial condition.
        \item Our restated certificate of incorporation, amended and restated bylaws, and Delaware law contain provisions that could discourage a third party from acquiring us and consequently decrease the market value of an investment in our stock.
    \end{enumerate} 
    \item Financial and Accounting Risks
    \begin{enumerate}
        \item We may be required to record impairments of long-lived assets or incur other charges relating to our company-operated retail operations.
        \item Our quarterly revenues and operating results are subject to fluctuation as a result of a variety of factors, which could increase the volatility of the price of our common stock.
        \item Our senior revolving credit facility agreement (as amended to date, the “Credit Agreement”) contains financial covenants that require us to maintain certain financial measures and ratios and includes restrictive covenants that limit our ability to take certain actions. A breach of any of those restrictive covenants may cause us to be in default under the Credit Agreement, and our lenders could foreclose on our assets.
        \item Changes in the method for determining LIBOR and/or the potential replacement of LIBOR could adversely affect our results of operations.
        \item The risks of maintaining significant cash abroad could adversely affect our cash flows in the U.S., our business, and financial results.
        \item Changes in tax laws and unanticipated tax liabilities and adverse outcomes from tax audits or tax litigation could adversely affect our effective income tax rate and profitability.
        \item We may fail to meet analyst and investor expectations, which could cause the price of our stock to decline.
    \end{enumerate} 
\end{enumerate}

Overall, the structure of the risk factors section in 2021 is very similar to that of 2020, although the details within each risk factor exhibit some differences. Specifically, the 2021 Form~10-K includes additional risk factors compared to 2020, some contents in the risk factors were modified, but no risk factors were removed. For illustration, we provide below a summary of the risk headings, highlighting the new risk factors introduced in 2021.

\begin{enumerate}
    \item Risks Related to Our Supply Chain
    \begin{enumerate}
        \item Supply chain disruptions could interrupt product manufacturing and global logistics and increase product costs.
        \item Our operations are dependent on the global supply chain and impacts of supply chain constraints and inflationary pressure could adversely impact our operating results.
        \item We depend on a number of suppliers for key production materials, and any disruption in the supply of such materials could interrupt product manufacturing and increase product costs.
    \end{enumerate} 
    \item Risks Specific to Our Company and Strategy
    \begin{enumerate}
        \item Our ability to realize the benefits from the HEYDUDE Acquisition is substantially dependent on our ability to continue to grow HEYDUDE.
        \item The announcement and pendency of the HEYDUDE Acquisition may have an adverse effect on our business, financial condition, operating results, and cash flows.
        \item The incurrence by us of substantial indebtedness in connection with the financing of the HEYDUDE Acquisition may have an adverse impact on our liquidity, limit our flexibility in responding to other business opportunities, and increase our vulnerability to adverse economic and industry conditions.
    \end{enumerate} 
    \item Financial and Accounting Risks
    \begin{enumerate}
        \item Our indebtedness could adversely affect our business, financial condition, and results of operations, as well as the ability to meet payment obligations under our Revolving Credit Agreement and the Notes.
        \item Despite our current level of indebtedness, we may be able to incur substantially more debt, which could increase the risks to our financial condition described above.
    \end{enumerate} 
\end{enumerate}

\subsubsection{Summary of SVB's Risk Factors under the Unified Risk Headings}\label{sec:SVB_risk_summary2}
\begin{enumerate}
    \item Credit Risks
    \begin{enumerate}
        \item Because of the credit profile of our loan portfolio, our levels of nonperforming assets and charge-offs can be volatile, and we may need to make material provisions for credit losses in any period.
        \item Our ACL is determined based upon both objective and subjective factors, and may not be adequate to absorb any actual credit losses.
        \item The borrowing needs of our clients have been and may continue to be unpredictable, especially during a challenging economic environment. We may not be able to meet our unfunded credit commitments, or adequately reserve for losses, which could have a material adverse effect.
        \item Decreases in the amount of equity capital available to our clients could adversely affect us.
    \end{enumerate}
    \item Market, Interest Rate, and Liquidity Risks
    \begin{enumerate}
        \item Instability and adverse developments in national or global financial markets and overall economic conditions, including as a result of geopolitical matters, may materially affect our business, financial condition and results of operations.
        \item Our interest rate spread may decline further in the future. Any material reduction in our interest rate spread could have a material adverse effect on our business, results of operations or financial condition.
        \item Liquidity risk could impair our ability to fund operations and jeopardize our financial condition.
        \item Our equity warrant assets, venture capital and private equity fund investments and direct equity investment portfolio gains and losses depend upon the performance of our portfolio investments and the general condition of the public and private equity and M$\&$A markets which are uncertain and may vary materially by period.
        \item Changes in the market for public equity offerings, M$\&$A or a slowdown in private equity or venture capital investment levels have affected and may continue to affect the needs of our clients for investment banking or M$\&$A advisory services and lending products, which could adversely affect our business, results of operations or financial condition.
        \item Concentration of risk increases the potential for significant losses, while the establishment of limits to mitigate concentration risk increases the potential for lower revenues and slower growth.
        \item Decreases in the amount of equity capital available to our clients could adversely affect us.
    \end{enumerate}
    \item Operational and Technology Risks
    \begin{enumerate}
        \item Fraudulent activity could have a material adverse effect on our business, financial condition or results of operations.
        \item A data breach, disruption of service or other cybersecurity-related incident could have a material adverse effect on our business, financial condition or results of operations.
        \item We face risks associated with the ability of our IT systems and our people and processes to support our operations and future growth effectively.
        \item Business disruptions due to natural disasters and other external events beyond our control, including pandemics, have in the past adversely affected our business, financial condition or results of operations and may do so in the future.
        \item We face risks from a prolonged work-from-home arrangement as well as our implementation of a broader plan to return to the office.
        \item We face risks from our reliance on business partners, service providers and other third parties.
        \item The soundness of other financial institutions could adversely affect us.
        \item We depend on the accuracy and completeness of information about customers and counterparties.
        \item We face risks associated with our current international operations and ongoing international expansion.
        \item Our holding company, SVB Financial, relies on equity warrant assets income, investment distributions and dividends from its subsidiaries for most of its cash revenues.
        \item Climate change has the potential to disrupt our business and adversely impact the operations and creditworthiness of our clients.
        \item The COVID-19 pandemic created significant economic and financial disruptions that adversely affected certain aspects of our business and operations and such disruptions have the potential to reoccur.
        \item An ineffective risk management framework could have a material adverse effect on our strategic planning and our ability to mitigate risks and/or losses and could have adverse regulatory consequences.
        \item If we fail to retain key employees or recruit new employees, or if we are unable to effectively manage the growth of our employee base, our growth and results of operations could be adversely affected.
    \item If we fail to maintain an effective system of internal control over financial reporting, we may not be able to accurately report our financial results. As a result, current and potential holders of our securities could lose confidence in our financial reporting, which would harm our business and the trading price of our securities.
    \item We rely on quantitative and qualitative models to measure risks and to estimate certain financial values.
    \end{enumerate}
    \item Regulatory and Legal Risks
    \begin{enumerate}
        \item We are subject to extensive regulation that could limit or restrict our activities, impose financial requirements or limitations on the conduct of our business, or result in higher costs to us, and the stringency of the regulatory framework applicable to us may increase if, and as, our balance sheet continues to grow.
        \item As a bank holding company with more than $\$$100 billion of average total consolidated assets, we are subject to stringent regulations, including certain enhanced prudential standards applicable to large bank holding companies. If we exceed certain other thresholds, we will become subject to even more stringent regulations.
        \item We face a risk of noncompliance and enforcement action with the Bank Secrecy Act, other anti-money laundering and anti-bribery statutes and regulations and U.S. economic and trade sanctions.
        \item If we were to violate, or fail to comply with, international, federal or state laws or regulations governing financial institutions, we could be subject to disciplinary action or litigation that could have a material adverse effect on our business, financial condition, results of operations or reputation.
        \item Laws and regulations regarding the handling of personal data and information may impede our services or result in increased costs, legal claims or fines against us.
        \item Adverse results from litigation or governmental or regulatory investigations can impact our business practices and operating results.
        \item A failure to identify and address potential conflicts of interest could adversely affect our businesses.
        \item Anti-takeover provisions and federal laws may prevent a merger or acquisition that may be attractive to stockholders and/or have an adverse effect on our stock price.
        \item Changes in accounting standards could materially impact our financial statements.
        \item We could be adversely affected by changes in tax laws and regulations or their interpretations.
    \end{enumerate}
    \item Strategic and External Risks
    \begin{enumerate}
        \item We face competitive pressures that could adversely affect our business, financial results, or growth.
        \item Our ability to maintain or increase our market share depends on our ability to attract and maintain, as well as meet the needs of, existing and future clients.
        \item We face risks in connection with our strategic undertakings and new business initiatives.
        \item We may fail to realize growth prospects and benefits anticipated as a result of the Boston Private acquisition.
        \item Any damage to our reputation and relationships could have a material adverse effect on our business.
    \end{enumerate}
    \item Other Risks
    \begin{enumerate}
        \item We do not currently pay dividends on shares of our common stock and may not do so in the future.
        \item The price of our capital stock may be volatile or may decline.
        \item Our capital stock is subordinate to our existing and future indebtedness.
    \end{enumerate}
\end{enumerate}

\subsubsection{Summary of WFC's Risk Factors under the Unified Risk Headings}\label{sec:WF_risk_summary2}

\begin{enumerate}
    \item Credit Risks
    \begin{enumerate}
        \item Increased credit risk, including as a result of a deterioration in
        economic conditions or changes in market conditions, could
        require us to increase our provision for credit losses and
        allowance for credit losses and could have a material adverse
        effect on our results of operations and financial condition.
        \item We may have more credit risk and higher credit losses to the
        extent our loans are concentrated by loan type, industry
        segment, borrower type, or location of the borrower or
        collateral.
        \item We may suffer losses, penalties, or other adverse
        consequences if we fail to satisfy our obligations with respect
        to the residential mortgage loans or other assets we originate
        or service.
    \end{enumerate}
    \item Market, Interest Rate, and Liquidity Risks
    \begin{enumerate}
        \item Our financial results have been, and will continue to be, materially affected by general economic conditions, and a deterioration in economic conditions or in the financial markets may materially adversely affect our lending and other
        businesses and our financial results and condition.
        \item Changes in interest rates and financial market values could
        reduce our net interest income and earnings, as well as our
        other comprehensive income, including as a result of
        recognizing losses on the debt and equity securities that we
        hold in our portfolio or trade for our customers.
        \item Effective liquidity management is essential for the operation
        of our business, and our financial results and condition could be
        materially adversely affected if we do not effectively manage
        our liquidity.
        \item Adverse changes in our credit ratings could have a material
        adverse effect on our liquidity, cash flows, and financial results
        and condition.
        \item We rely on dividends from our subsidiaries for liquidity, and
        federal and state law, regulatory requirements, and certain
        contractual arrangements can limit those dividends.
        \item Our mortgage banking revenue can be volatile from quarter to
        quarter, including from the impact of changes in interest rates,
        and we rely on the GSEs to purchase our conforming loans to
        reduce our credit risk and provide liquidity to fund new
        mortgage loans.
    \end{enumerate}
    \item Operational and Technology Risks
    \begin{enumerate}
        \item A failure in or breach of our operational or security systems,
        controls or infrastructure, or those of our third-party vendors
        and other service providers, could disrupt our businesses,
        damage our reputation, increase our costs and cause losses.
        \item A cyberattack or other information security incident could
        have a material adverse effect on our results of operations,
        financial condition, or reputation.
        \item Our framework for managing risks may not be fully effective in
        mitigating risk and loss to us.
        \item Our financial statements require certain assumptions,
        judgments, and estimates and rely on the effectiveness of our
        internal control over financial reporting.
    \end{enumerate}
    \item Regulatory and Legal Risks
    \begin{enumerate}
        \item Current and future legislation and/or regulation could require
        us to change certain of our business practices, reduce our
        revenue and earnings, impose additional costs on us or
        otherwise adversely affect our business operations and/or
        competitive position.
        \item We could be subject to more stringent capital, leverage or
        liquidity requirements or restrictions on our growth, activities
        or operations if regulators determine that our resolution or
        recovery plan is deficient.
        \item Our security holders may suffer losses in a resolution of
        Wells Fargo even if creditors of our subsidiaries are paid in full.
        \item Regulatory rules and requirements may impose higher capital
        and liquidity levels, limiting our ability to pay common stock
        dividends, repurchase our common stock, invest in our
        business, or provide loans or other products and services to our
        customers.
        \item FRB policies, including policies on interest rates, can
        significantly affect business and economic conditions and our
        financial results and condition.
        \item We may be exposed to additional legal or regulatory
        proceedings, costs, and other adverse consequences related to
        instances where customers may have experienced financial
        harm.
        \item We may incur fines, penalties, business restrictions, and other
        adverse consequences from regulatory violations or from any
        failure to meet regulatory standards or expectations.
        \item We are exposed to potential financial loss or other adverse
    \end{enumerate}
    \item Strategic and External Risks
    \begin{enumerate}
        \item Reputational harm, including as a result of our actual or alleged
        conduct or public opinion of the financial services industry
        generally, could adversely affect our business, results of
        operations, and financial condition.
        \item If we are unable to develop and execute effective business
        plans or strategies or manage change effectively, our
        competitive standing and results of operations could suffer.
        \item Our operations and business could be adversely affected by the
        impacts of climate change.
        \item We face significant and increasing competition in the rapidly
        evolving financial services industry.
        \item Our ability to attract and retain qualified employees is critical
        to the success of our business and failure to do so could adversely affect our business performance, competitive
        position and future prospects.
    \end{enumerate}
    \item Other Risks
    \begin{enumerate}
        \item Changes in accounting standards, and changes in how
        accounting standards are interpreted or applied, could
        materially affect our financial results and condition.
    \end{enumerate}
\end{enumerate}

\end{document}